\newtheorem{theorem}{Theorem}
\def\BibTeX{{\rm B\kern-.05em{\sc i\kern-.025em b}\kern-.08em
    T\kern-.1667em\lower.7ex\hbox{E}\kern-.125emX}}
\begin{document}

\title{Network Optimization in Dynamic Systems: Fast Adaptation via Zero-Shot Lagrangian Update
}
\author{\IEEEauthorblockN{I-Hong Hou}
\IEEEauthorblockA{\textit{Dept. of ECE} \\
\textit{Texas A$\&$M University}\\
College Station, TX 77843, USA \\
ihou@tamu.edu}
\thanks{
\textcopyright 2024 IEEE. Personal use of this material is permitted. Permission from IEEE must be obtained for all other uses, including reprinting/republishing this material for advertising or promotional purposes, collecting new collected works for resale or redistribution to servers or lists, or reuse of any copyrighted component of this work in other works.
 
This material is based upon work supported in part by NSF under Award Numbers ECCS-2127721 and CCF-2332800, the U.S. Army Research Office under Grant Number W911NF-22-1-015, and the U.S. Office of Naval Research under Grant Number N000142412615.}
}
\maketitle

\begin{abstract}
This paper addresses network optimization in dynamic systems, where factors such as user composition, service requirements, system capacity, and channel conditions can change abruptly and unpredictably. Unlike existing studies that focus primarily on optimizing long-term performance in steady states, we develop online learning algorithms that enable rapid adaptation to sudden changes. Recognizing that many current network optimization algorithms rely on dual methods to iteratively learn optimal Lagrange multipliers, we propose zero-shot updates for these multipliers using only information available at the time of abrupt changes. By combining Taylor series analysis with complementary slackness conditions, we theoretically derive zero-shot updates applicable to various abrupt changes in two distinct network optimization problems. These updates can be integrated with existing algorithms to significantly improve performance during transitory phases in terms of total utility, operational cost, and constraint violations. Simulation results demonstrate that our zero-shot updates substantially improve transitory performance, often achieving near-optimal outcomes without additional learning, even under severe system changes.
\end{abstract}

\begin{IEEEkeywords}
Network optimization, network utility maximization, online learning, zero-shot learning
\end{IEEEkeywords}

\section{Introduction}

The study of network optimization has achieved tremendous success in recent decades. Many network utility maximization (NUM) techniques have been proposed and shown to find the optimal control decisions through iterative learning with low complexity and without complete knowledge about the network. However, these techniques typically assume a stationary system. When some aspects of the system change, they need to re-learn the optimal control policy, which can lead to significant performance degradation and constraint violation in the transitory phase. As many modern networked systems, such as mobile wireless networks, can encounter frequent, abrupt, and unpredictable changes in all aspects of the system, including user composition, service requirements, system capacity, and channel conditions, it has become increasingly critical to optimize the system performance not only in the steady state but also during the transitory phases.

To address the above challenges, this paper aims to develop online learning algorithms that can quickly adapt to any changes in the system. We note that many NUM techniques are based on dual methods, where control decisions are characterized by a set of Lagrange multipliers. These techniques then update the Lagrange multipliers in each iteration through, for example, stochastic gradient descent based on empirical observations. Thus, these NUM techniques are fundamentally online learning algorithms for finding the optimal Lagrange multipliers. When some aspects of the system change, the optimal Lagrange multipliers also change. To quickly learn the optimal Lagrange multipliers after some aspects of the system change, we propose a new zero-shot learning algorithm that can obtain an accurate estimate of the new optimal Lagrange multipliers using only information available before the system changes. This zero-shot learning algorithm can then bootstrap existing NUM techniques and improve their performance during transitory phases.

We will study two popular problems in network optimization: distributed rate control for utility maximization and stochastic network optimization with service requirements. The first problem considers a distributed system in which each user chooses its own rate to maximize total utility under a system-wide capacity constraint. A controller aims to induce the optimal rates without knowing the utility function of each individual user. The second problem considers a controller that serves multiple users, each with a strict long-term service requirement, in a stochastic system. The controller aims to satisfy the service requirements of all users while incurring the least amount of operational cost. 

In the distributed rate control for the utility maximization problem, we consider three kinds of abrupt changes. The system capacity may change due to dynamic network slicing or external interference, users may join and leave the system dynamically, and the channel qualities may change. Combining Taylor sequence analysis and complementary slackness conditions, we theoretically derive first-order approximations of the optimal Lagrange multiplier with respect to each of the three kinds of abrupt changes. Furthermore, first-order approximations can be easily calculated using only observable system parameters and past rates of users. We then develop a simple online learning algorithm that employs zero-shot Lagrangian updates based on first-order approximations.

In the stochastic network optimization with service requirements problem, we also consider three kinds of abrupt changes: Users may dynamically adjust their service requirements, users may join and leave the system dynamically, and the mobility patterns of users may change suddenly. Once again, we combine Taylor sequence analysis and complementary slackness conditions to derive first-order approximations of the optimal Lagrange multipliers with respect to each of these three kinds of abrupt changes. We then develop a simple online learning algorithm with zero-shot Lagrangian updates.


In addition to theoretical analysis, we also evaluate the utility of our online learning algorithms with zero-shot updates through comprehensive simulations. The simulation results show that incorporating the zero-shot updates significantly improves the adaptability of NUM techniques, both in terms of the convergence rate of system performance and the accumulated constraint violations. In many settings, the zero-shot updates alone achieve near-optimal performance without any additional iterations.

The rest of the paper is organized as follows: Section~\ref{sec:related} reviews existing work on NUM. Section~\ref{sec:overview} provides an overview of the dual methods in the context of distributed rate control for utility maximization and stochastic network optimization with service requirements. Section~\ref{sec:primal1} considers various system dynamics in the distributed rate control problem and proposes an online learning algorithm with zero-shot updates. Section~\ref{sec:primal2} proposes an onine learning algorithm with zero-shot updates for all possible system dynamics in the stochastic network optimization problem. Section~\ref{sec:simulation} presents the simulation results for the two problems. Finally, Section~\ref{sec:conclusion} concludes the paper.

\section{Related Work} \label{sec:related}

Network optimization has gathered significant research interests. Tassiulas and Ephremides \cite{tassiulas1990stability} have proposed the max-weight scheduling policy and have shown that it is throughput-optimal. Kelly \cite{kelly1997charging} and Kelly et al. \cite{kelly1998rate} have introduced the concept of shadow price and used it to achieve proportional fairness in the network. Subsequent studies \cite{stolyar2005maximizing, lin2006tutorial, palomar2006tutorial, neely2022stochastic, yi2008stochastic} have developed general network utility maximization frameworks. In a nutshell, these studies formulate a network optimization problem as a convex optimization problem, and then show that the evolution of queue status is equivalent to an online gradient algorithm for learning the the Lagrange multipliers or shadow prices. These NUM techniques have been extended to address a multitude of network problems, including packet caching \cite{wang2018distributed}, deadline-constrained networks \cite{hou2010utilitya, hou2010utilityb}, energy-efficient networks \cite{zuo2017energy, jiang2022joint, xia2021online}, microgrid management \cite{zhang2013robust}, and UAV control \cite{wang2018joint, ma2021uav}. In addition, there have been many studies on accelerating the learning rates \cite{liu2016achieving, liu2016heavy, chen2017learn} and dealing with unknown/unobservable system parameters \cite{fu2021learning, verma2020stochastic} of NUM techniques. All these studies assume stationary systems where all parameters of the system are fixed or evolve according to a well-defined stationary random process.

There are a few recent studies on NUM problems in dynamic systems that can change arbitrarily. Liang and Modiano \cite{liang2018network} has studied NUM problems in adversarial environments. Yang et al. \cite{yang2023learning} have proposed using discounted UCB for multi-server systems with time-varying service rates. Nguyen and Modiano \cite{nguyen2023learning} has proposed an online learning algorithm that stabilizes all queues in non-stationary systems asymptotically. However, these studies still focus on long-term performance analysis and ignore transient performance.

\section{An Overview for Dual Methods} \label{sec:overview}

In this section, we provide a brief overview of applying dual methods to two different types of network optimization problems. 

Throughout this paper, we use $\vec{x}$ to denote the vector containing $x_n$ for all $n$. Given a fixed $s$, we use $\vec{x}_s$ to denote the vector that contains $x_{n,s}$ for all $n$. We use $\tilde{X}$ to denote the matrix that contains $x_{n,s}$ for all $n$ and $s$.

\subsection{Distributed Rate Control for Utility Maximization}

Consider a wireless system with an access point (AP) and $N$ wireless users that together operate in a band with $C$ units of bandwidth. The channel quality can be different for different users. We model this heterogeneous channel quality by assuming that it takes $p_n$ units of bandwidth to deliver one unit of data to user $n$. Each user $n$ can dynamically adjust the amount of traffic it generates, which we denote by $x_n$, and receives a utility of $U_n(x_n)$ when it generates $x_n$ traffic. We assume that the utility function $U_n(\cdot)$ is a twice-differentiable, strictly increasing, and strictly concave, in the sense that there exists $\delta>0$ such that $U_n''(x)<-\delta$ for all $x\in[0, C/p_n]$, function. The AP aims to maximize the total utility of the system. Since it takes $p_n$ bandwidth to deliver one unit of data for $n$ and the system only has $C$ units of bandwidth, the AP also needs to ensure $\sum_n p_nx_n\leq C$. Hence, the AP aims to solve the following optimization problem:

\begin{align}
    \mbox{\textbf{Primal1:}\hspace{20pt}}\max_{\vec{x}} & \sum_n U_n(x_n) \label{eq:primal1}\\
    \mbox{s.t. } & \sum_n p_nx_n\leq C,\label{eq:primal2}\\
    &x_n\geq 0, \forall n. \label{eq:primal3}
\end{align}

The \textbf{Primal1} problem is a simple convex optimization problem when all utility functions $U_n(\cdot)$ are known to the AP. However, in practice, users may not wish to reveal their private utility functions. In addition, in many scenarios, users, rather than the AP, determine $x_n$. 

Dual decomposition is a distributed and iterative algorithm to solve the \textbf{Primal1} problem. It introduces a Lagrange multiplier $\lambda$ for the capacity constraint Eq. (\ref{eq:primal2}), which can be interpreted as a shadow price for using one unit of bandwidth. In each round $t$, the AP announces the current shadow price $\lambda_t$ to all users. Since the cost of using one unit of bandwidth is $\lambda_t$ and it takes $p_n$ units of bandwidth to deliver one unit of data for user $n$, user $n$ effectively needs to pay $q_{n,t}:=\lambda_tp_n$ to deliver one unit of data. User $n$ therefore chooses $x_n$ to maximize its own net utility $U_n(x_n)-q_{n,t}x_n$. Specifically, let $g_n(q):=\arg\max_{x:x\geq 0}U_n(x)-qx$, then user $n$ generates $x_{n,t}=g_n(q_{n,t})$ units of data in round $t$. After gathering or observing $x_{n,t}$ for all users, the AP updates the shadow price by $\lambda_{t+1}=\max\{0, \lambda_t+\epsilon_t(\sum_n x_{n,t}-C)\}$, where $\epsilon_t$ is a pre-determined step size, and then proceeds to round $t+1$.

It has been shown that, when $[\epsilon_1, \epsilon_2,\dots]$ is chosen properly, $\lambda_t$ converges to a $\lambda^*$ as $t\rightarrow\infty$. Moreover, $\lambda^*$ is the optimal shadow price in the sense that choosing $x_n=g_n(\lambda^*p_n)$ solves the \textbf{Primal1} problem. We refer interested readers to \cite{lin2004joint, lin2006tutorial, palomar2006tutorial} for a comprehensive introduction and analysis of the dual decomposition method.

Although we use the rate control problem in wireless networks to motivate the formulation of the \textbf{Primal1} problem, this formulation and the dual decomposition method can be applied to many other problems, such as the optimal power flow problem in power systems \cite{mhanna2018component} and big-data processing \cite{notarnicola2017distributed}.

Despite its simplicity and optimality, a critical weakness of the dual decomposition method is that it assumes a stationary system where all aspects of the system, including the set of users, the system capacity $C$, and the channel quality $p_n$ are fixed\footnote{Some extensions of the dual decomposition method consider that the system parameters can be stochastic, but they still require the probability distributions of random variables to be fixed.}. However, in practice, many systems can experience frequent changes. Users may join or leave the system dynamically. The system capacity can change due to, for example, dynamic network slicing. The channel quality can also change frequently due to the mobility of users. Every time any aspect of the system changes, the AP needs to re-learn the optimal shadow price and the system can suffer from low utility and capacity violation during the transitory phase. To address these challenges, we will propose zero-shot algorithms that enable the AP to get an accurate estimate of the optimal shadow price whenever some aspects of the system change.

\subsection{Stochastic Network Optimization with Service Requirements}

Consider a stochastic wireless network with an AP serving $N$ wireless users. Time is slotted and the state of wireless channels can change from slot to slot in an i.i.d. fashion. We use $f_s$ to denote the probability that the channel state is $s\in\mathbb{S}$. At the beginning of each slot $t$, the AP observes the current channel state $s_t$ and then chooses to provide a service rate of $x_{n,t}$ to each user $n$. To provide service rate $\vec{x}_t$, the AP needs to pay a penalty $P_{s_t}(\vec{x}_t)$, where we assume that the state-dependent penalty function $P_s(\cdot)$ is a twice-differentiable, strictly increasing, and strongly convex function for all $s\in\mathbb{S}$. The penalty function can be chosen to reflect, for example, the amount of energy needed to deliver $x_{n,t}$ data to each user $n$. Furthermore, we assume that each user $n$ requires that its long-term service rate, $\lim_{T\rightarrow\infty}\frac{\sum_{t=1}^Tx_{n,t}}{T}$, is at least $A_n$. The AP aims to satisfy the service requirements of all users with minimum cost. Let $\tilde{Y}$ be the matrix containing $y_{n,s}$ for all $n$ and $s$, then the optimal AP policy can be obtained by solving the following optimization problem

\begin{align}
    \mbox{\textbf{Primal2:}\hspace{20pt}}\min_{\tilde{Y}} & \sum_n f_sP_s(\vec{y}_s) \label{eq:primal2-1}\\
    \mbox{s.t. } & \sum_s f_sy_{n,s}\geq A_n, \forall n,\label{eq:primal2-2}\\
    &y_{n,s}\geq 0, \forall n, s, \label{eq:primal2-3}
\end{align}
and then chooses $x_{n,t}=y_{n,s_t}$ in each time slot $t$.

\textbf{Primal2} is a high-dimensional constrained convex optimization problem and may be difficult to solve directly. The celebrated drift-plus-penalty policy \cite{neely2022stochastic} provides a simple and yet near-optimal solution. The drift-plus-penalty policy maintains a virtual queue for each user $n$. The queue length at time $t$, denoted by $Q_{n,t}$, is updated iteratively by
\begin{equation} \label{equation:primal2:q_update}
Q_{n, t+1} = \max\{0, Q_{n, t}+A_n-x_{n,t}\}.
\end{equation}
After observing $s_t$ and $\vec{Q}_t$, the AP chooses $\vec{x}_t$ to minimize $VP_{s_t}(\vec{x}_t)-\sum_n Q_{n,t}x_{n,t}$, where $V>0$ is a constant for the trade-off between penalty and convergence time. It has been shown that the long-term average penalty under the drift-plus-penalty policy is at most $O(1/V)$ worse than the optimal solution to \textbf{Primal2}, but the virtual queue length can be on the order of $\theta(V)$ and, therefore, the system takes $\theta(V)$ time to converge. To address the trade-off between penalty and convergence time, Huang et al. \cite{huang2014power} discovers that the long-term average virtual queue lengths reflect the solution to the dual problem below:
\begin{align}
	&\mbox{\textbf{Dual2:}}\nonumber\\
	&\max_{\vec{\lambda}} \min_{\tilde{Y}} \sum_s f_s\Big(VP_s(\vec{y}_s)-\sum_n \lambda_n(y_{n,s}-A_n)\Big).
\end{align}
Let $\vec{\lambda}^*$ be the optimal solution to \textbf{Dual2}, Huang et al. \cite{huang2014power} proposes choosing $\vec{x}_t$ to minimize $VP_{s_t}(\vec{x}_t)-\sum_n (\lambda^*_n+Q_{n,t})x_{n,t}$ and shows that doing so significantly accelerates the convergence rate of the drift-plus-penalty policy without degrading its penalty performance. 

However, \textbf{Dual2} is a complex convex optimization problem, and the approach of Huang et al. \cite{huang2014power} requires the AP to solve \textbf{Dual2} every time any aspect of the system changes, such as users changing their requirements $A_n$, joining/leaving the system, or changing their mobility patterns, which results in changes in $f_s$. Instead of solving \textbf{Dual2} every time there is a change in the system, we will propose low-complexity zero-shot algorithms to update $\vec{\lambda}^*$.

\section{Online Learning Algorithm with Zero-Shot Updates for Primal1} \label{sec:primal1}

In this section, we discuss how to obtain a zero-shot estimate of the optimal shadow price when some aspects of the system change in the \textbf{Primal1} problem. Specifically, let $\lambda^*$ be the optimal shadow price before the system changes and $\hat{\lambda}^*$ be the optimal shadow price after the system changes. We aim to obtain an accurate estimate of $\hat{\lambda}^*$ using only the information available to the AP before the system changes. We will then leverage the zero-shot estimates to develop an online learning algorithm.

Before we discuss our zero-shot estimation in different kinds of system changes, we first establish some useful properties of \textbf{Primal1}.

Recall that, after receiving a shadow price $\lambda$, each user $n$ chooses its rate $x_{n} = g_n(\lambda p_n)$, where $g_n(q):=\arg\max_{x:x\geq 0}U_n(x)-qx$. We then have 
\begin{equation}
    U'_n(g_n(q))=q, \label{eq:derivative}
\end{equation}
and, by taking derivative on both sides of the above equation,
\begin{equation}
    U''_n(g_n(q))g'_n(q)=1 \label{eq:derivative_inverse}
\end{equation}
Moreover, since $U_n(\cdot)$ is a strictly increasing function for all $n$, we have
\begin{equation}
    \sum_n U_n(g_n(\lambda^*p_n))=C,
\end{equation}
which is usually referred to as the complementary slackness condition.

\subsection{Dynamics in System Capacity}

Consider the situation where the system capacity changes from $C$ to $\hat{C}$. Let $\lambda^*$ be the optimal shadow price when the system capacity is $C$ and let $\hat{\lambda}^*$ be the optimal shadow price when the system capacity is $\hat{C}$. We aim to obtain a good estimate of $\hat{\lambda}^*$ using only information available to the AP at the time of the system capacity change and before making any additional queries to users, that is, the AP estimates $\hat{\lambda}^*$ only based on the values of $C$, $\hat{C}$, $\lambda^*$, $g_n(\lambda^*p_n)$, and $g_n'(\lambda^*p_n)$.

We first establish the following theorem.

\begin{theorem}\label{theorem:primal1:Cchange}
If the system capacity changes from $C$ to $\hat{C}$, then
\begin{equation}
    \hat{\lambda}^*=\lambda^*+\frac{\hat{C}-C}{\sum_n p_n^2 g_n'(\lambda^*p_n)}+O(|\hat{C}-C|^2).
\end{equation}
\end{theorem}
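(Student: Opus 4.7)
The plan is to combine complementary slackness for both the old and new capacities with a first-order Taylor expansion of the user-response map $g_n$ about the old shadow price $\lambda^*$. Because each $U_n$ is strictly increasing, leaving any bandwidth unused is suboptimal, so whenever $\lambda^*>0$ the capacity constraint (\ref{eq:primal2}) is tight; the complementary slackness identity I actually need is therefore $\sum_n p_n g_n(\lambda^* p_n)=C$, with the analogous identity $\sum_n p_n g_n(\hat\lambda^* p_n)=\hat C$ holding in the perturbed system. (I would flag that the identity displayed in the excerpt appears to contain a typographical slip, as the form required here matches the primal constraint itself rather than a sum involving $U_n$.)

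Subtracting the two identities and Taylor-expanding $g_n(\hat\lambda^* p_n)$ about $\hat\lambda^*=\lambda^*$ gives
\begin{equation*}
\hat C-C=\sum_n p_n\bigl[g_n(\hat\lambda^* p_n)-g_n(\lambda^* p_n)\bigr]=(\hat\lambda^*-\lambda^*)\sum_n p_n^2\,g_n'(\lambda^* p_n)+O\bigl((\hat\lambda^*-\lambda^*)^2\bigr),
\end{equation*}
and solving for $\hat\lambda^*-\lambda^*$ immediately produces the claimed first-order expression, modulo upgrading the remainder from $O((\hat\lambda^*-\lambda^*)^2)$ to $O(|\hat C-C|^2)$.

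For that final upgrade I would apply the implicit function theorem to $F(\lambda,C):=\sum_n p_n g_n(\lambda p_n)-C$. From Eq.\ (\ref{eq:derivative_inverse}) one has $g_n'(q)=1/U_n''(g_n(q))$, and the strong-concavity hypothesis $U_n''(x)<-\delta$ forces each $g_n'(\lambda^* p_n)$ to be strictly negative with $|g_n'(\lambda^* p_n)|\le 1/\delta$, so $\partial F/\partial\lambda=\sum_n p_n^2 g_n'(\lambda^* p_n)$ is strictly negative and bounded away from zero. The IFT then realizes $\lambda^*$ as a locally $C^1$ function of $C$, yielding the Lipschitz bound $\hat\lambda^*-\lambda^*=O(|\hat C-C|)$; plugging this into the Taylor remainder converts it into $O(|\hat C-C|^2)$ as required.

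The main obstacle is precisely this Lipschitz step. The leading-order calculation itself is nearly mechanical once complementary slackness is invoked; what makes the error bound nontrivial is certifying that the denominator $\sum_n p_n^2 g_n'(\lambda^* p_n)$ stays bounded away from zero so that the implicit map $C\mapsto\lambda^*$ is smooth near the operating point. The strong-concavity assumption on each $U_n$ supplies exactly this property, and I would keep track of the dependence on $\delta$ and on $\max_n p_n$ inside the implicit constant of the $O(\cdot)$ term.
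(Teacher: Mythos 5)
Your proposal is correct and its leading-order computation is essentially the paper's: both rest on the complementary slackness identities $\sum_n p_n g_n(\lambda^* p_n)=C$ and $\sum_n p_n g_n(\hat\lambda^* p_n)=\hat C$ (you are right that the displayed slackness condition in the text has a typographical slip --- it should read $\sum_n p_n g_n(\lambda^* p_n)=C$, which is what the proof actually uses), together with a first-order Taylor expansion and the identity $g_n'(q)=1/U_n''(g_n(q))$ from Eq.~(\ref{eq:derivative_inverse}). The paper expands $U_n'$ about $g_n(\lambda^* p_n)$ and multiplies through by $p_n g_n'(\lambda^* p_n)$, whereas you expand $g_n$ directly; these are the same equation. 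Where you genuinely diverge is in upgrading the remainder to $O(|\hat C-C|^2)$: the paper exploits monotonicity of each $g_n$ so that the per-user increments $p_n g_n(\hat\lambda^* p_n)-p_n g_n(\lambda^* p_n)$ all share a sign, giving $\sum_n |\cdot|^2\le\bigl|\sum_n(\cdot)\bigr|^2=|\hat C-C|^2$, while you invoke the implicit function theorem on $F(\lambda,C)=\sum_n p_n g_n(\lambda p_n)-C$ to get the Lipschitz bound $|\hat\lambda^*-\lambda^*|=O(|\hat C-C|)$ and then square it. Both routes close the argument; yours is the more standard sensitivity-analysis argument and generalizes more readily (e.g.\ to the vector-valued perturbations in Theorems~\ref{theorem:primal2:Achange} and \ref{theorem:primal2:Fchange}, where the same-sign trick is unavailable), while the paper's avoids any appeal to the IFT. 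One small correction: the bound $|g_n'(\lambda^* p_n)|\le 1/\delta$ that you cite is an \emph{upper} bound on the derivative magnitude and does not by itself keep $\partial F/\partial\lambda$ away from zero; what you actually need is that each $g_n'(\lambda^* p_n)=1/U_n''(g_n(\lambda^* p_n))$ is finite and strictly negative, which follows from $U_n''$ being finite and bounded above by $-\delta$, and this nondegeneracy at the operating point is all the IFT requires for the local $O(\cdot)$ statement.
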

\begin{proof}
    By Eq. (\ref{eq:derivative}), we have $\hat{\lambda}^*p_n=U'_n(g_n(\hat{\lambda}^*p_n))$ and ${\lambda}^*p_n=U'_n(g_n({\lambda}^*p_n))$. Hence, for any $n$,
    \begin{align}
        &(\hat{\lambda}^*p_n-{\lambda}^*p_n)p_ng_n'(\lambda^*p_n)\nonumber\\
        =&[U'_n(g_n(\hat{\lambda}^*p_n))-U'_n(g_n({\lambda}^*p_n))]p_ng_n'(\lambda^*p_n)\nonumber\\
        =&[g_n(\hat{\lambda}^*p_n)-g_n({\lambda}^*p_n)]U''_n(g_n({\lambda}^*p_n))p_ng_n'(\lambda^*p_n) \nonumber\\
        &+ O(|p_ng_n(\hat{\lambda}^*p_n)-p_ng_n({\lambda}^*p_n)|^2) \hspace{20pt}\mbox{(Taylor's Theorem)} \nonumber\\
        =&p_ng_n(\hat{\lambda}^*p_n)-p_ng_n({\lambda}^*p_n)\nonumber\\
        &+ O(|p_ng_n(\hat{\lambda}^*p_n)-p_ng_n({\lambda}^*p_n)|^2),  \label{eq:C_chanage1}
    \end{align}
    where the last step follows Eq. (\ref{eq:derivative_inverse}).

    Since user $n$ chooses $x_n$ to be $g_n(\lambda^*p_n)$ when the shadow price is $\lambda^*$, and $g_n(\hat{\lambda}^*p_n)$ when the shadow price is $\hat{\lambda}^*$, we have the complementary slackness conditions $\sum_n p_ng_n(\lambda^*p_n)=C$ and $\sum_n p_ng_n(\hat{\lambda}^*p_n)=\hat{C}$. Summing Eq. (\ref{eq:C_chanage1}) over all users yield
    \begin{align}
        &\sum_n(\hat{\lambda}^*-{\lambda}^*)p_n^2g_n'(\lambda^*p_n)\nonumber\\
        =&\sum_n p_ng_n(\hat{\lambda}^*p_n)-\sum_np_ng_n({\lambda}^*p_n)\nonumber\\
        &+ \sum_nO(|g_n(\hat{\lambda}^*p_n)-g_n({\lambda}^*p_n)|^2)\nonumber\\
        =&\hat{C}-C+O(|\hat{C}-C|^2), \label{eq:C_chanage2}
    \end{align}
    where $\sum_nO(|p_ng_n(\hat{\lambda}^*p_n)-p_ng_n({\lambda}^*p_n)|^2) \leq O(|\sum_n [p_ng_n(\hat{\lambda}^*p_n)-p_ng_n({\lambda}^*p_n)]|^2)=O(|\hat{C}-C|^2)$ holds because $g_n(\cdot)$ is a decreasing function for all $n$.

    By Eq. (\ref{eq:C_chanage2}), we have $\hat{\lambda}^*=\lambda^*+\frac{\hat{C}-C}{\sum_n p^2_n g_n'(\lambda^*p_n)}+O(|\hat{C}-C|^2)$.
\end{proof}

Using Theorem~\ref{theorem:primal1:Cchange}, we can get a zero-shot estimate of $\hat{\lambda}^*$:
\begin{equation}\label{eq:primal1:estimate Cchange}
    \hat{\lambda}^*\approx\lambda^*+\frac{\hat{C}-C}{\sum_n p_n^2 g_n'(\lambda^*p_n)}.
\end{equation}

\subsection{Dynamics in User Composition}

Consider scenarios in which users may join or leave the system dynamically. We first address the case when a user leaves the system. Without loss of generality, assume that user $N$ leaves the system. Let $\lambda^*$ be the optimal shadow price before user $N$ leaves the system. We aim to obtain a zero-shot estimate of the optimal shadow price after user $N$ leaves the system, denoted by $\hat{\lambda}^*$ using only $C, \lambda^*$, $g_n(\lambda^*p_n)$, and $g_n'(\lambda^*p_n)$.

We can construct an alternative system with only users $1, 2, \dots, N-1$ and system capacity $C-p_Ng_N(\lambda^*p_N)$. It is easy to show that $\lambda^*$ is the optimal shadow price for this alternative system. Thus, from the perspective of users $1, 2, \dots, N-1$, the effect of user $N$ leaving the system is equivalent to having the system capacity in the alternative system increasing from $C-p_Ng_N(\lambda^*p_N)$ to $C$. By Theorem~\ref{theorem:primal1:Cchange}, we can estimate $\hat{\lambda}^*$ by
\begin{equation}\label{eq:primal1:estimate leave}
\hat{\lambda}^*\approx\lambda^*+\frac{p_Ng_N(\lambda^*p_N)}{\sum_{n=1}^{N-1} p_n^2 g_n'(\lambda^*p_n)}.
\end{equation}

Next, we address the case when a new user, numbered as user $N+1$, joins the system. Assuming that we know $p_{N+1}$, $g_{N+1}(\lambda^*p_{N+1})$, and $g_{N+1}'(\lambda^*p_{N+1})$, we can then construct an alternative system with users $1, 2, \dots, N+1$ and system capacity $C+p_{N+1}g_{N+1}(\lambda^*p_{N+1})$. It is easy to show that $\lambda^*$ is the optimal shadow price for this alternative system. Thus, the effect of user $N+1$ joining the system is equivalent to having the system capacity in the alternative system reducing from $C+p_{N+1}g_{N+1}(\lambda^*p_{N+1})$ to $C$. By Theorem~\ref{theorem:primal1:Cchange}, we can estimate $\hat{\lambda}^*$ by
\begin{equation} \label{eq:primal1:estimate join}
\hat{\lambda}^*\approx\lambda^*-\frac{p_{N+1}g_{N+1}(\lambda^*p_{N+1})}{\sum_{n=1}^{N+1} p_n^2 g_n'(\lambda^*p_n)}. 
\end{equation}

In the event that $p_{N+1}$, $g_{N+1}(\lambda^*p_{N+1})$, and $g_{N+1}'(\lambda^*p_{N+1})$ are not immediately known when user $N+1$ joins the system, we can use the average of $p_n$, $g_{n}(\lambda^*p_{n})$, and $g_{n}'(\lambda^*p_{n})$ of all existing users as estimates of $p_{N+1}$, $g_{N+1}(\lambda^*p_{N+1})$, and $g_{N+1}'(\lambda^*p_{N+1})$. We can then apply Eq.~(\ref{eq:primal1:estimate join}) as a zero-shot estimate of $\hat{\lambda}^*$.

\subsection{Dynamics in Channel Qualities}

We now consider that, at a certain point of time, the channel qualities of users change from $\vec{p}$ to $\vec{\hat{p}}$. Let $\lambda^*$ be the optimal shadow price when the channel quality is $\vec{p}$ and let $\hat{\lambda}^*$ be the optimal shadow price when the channel quality is $\vec{\hat{p}}$. The AP aims to obtain a good estimate of $\hat{\lambda}$ using only values of $\vec{p}$, $\vec{\hat{p}}$, $\lambda^*$, $g_n(\lambda^*p_n)$, and $g_n'(\lambda^*p_n)$.

We first establish the following theorem.
\begin{theorem} \label{theorem:primal1:pchanges}
If the channel quality changes from $\vec{p}$ to $\vec{\hat{p}}$, then
    \begin{align}
    \hat{\lambda}^*=& \lambda^* - \sum_n (\hat{p}_n-p_n)\frac{g_n(\lambda^*p_n)+\lambda^*\hat{p}_ng'_n(\lambda^*p_n)}{\sum_m \hat{p}_m^2g'_m(\lambda^*p_m)}\nonumber\\
    &+O(\sum_n|\hat{p}_n-p_n|^2).
    \end{align}
\end{theorem}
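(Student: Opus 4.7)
The plan is to mirror the proof of Theorem~\ref{theorem:primal1:Cchange}, but now handle two simultaneous perturbations: the shadow price moves from $\lambda^*$ to $\hat{\lambda}^*$ \emph{and} the channel parameter moves from $p_n$ to $\hat{p}_n$ for every user. The starting point is again the identity $U'_n(g_n(q)) = q$, applied at $q = \lambda^* p_n$ and at $q = \hat{\lambda}^* \hat{p}_n$. Subtracting and applying Taylor's theorem to $U'_n \circ g_n$ around $\lambda^* p_n$, then multiplying through by $\hat{p}_n g'_n(\lambda^* p_n)$ and invoking Eq.~(\ref{eq:derivative_inverse}), I expect the per-user identity
\begin{equation*}
(\hat{\lambda}^*\hat{p}_n - \lambda^* p_n)\,\hat{p}_n g'_n(\lambda^* p_n) = \hat{p}_n\bigl[g_n(\hat{\lambda}^*\hat{p}_n) - g_n(\lambda^* p_n)\bigr] + O(|\hat{\lambda}^*\hat{p}_n - \lambda^* p_n|^2).
\end{equation*}

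Second, I would split the left-hand side as $\hat{\lambda}^*\hat{p}_n - \lambda^* p_n = (\hat{\lambda}^* - \lambda^*)\hat{p}_n + \lambda^*(\hat{p}_n - p_n)$ in order to isolate the unknown $\hat{\lambda}^* - \lambda^*$, and then sum over $n$. On the right-hand side, the two complementary slackness conditions read $\sum_n \hat{p}_n g_n(\hat{\lambda}^*\hat{p}_n) = C$ (after the change) and $\sum_n p_n g_n(\lambda^* p_n) = C$ (before). Rewriting $\hat{p}_n = p_n + (\hat{p}_n - p_n)$ in the second sum that appears on the right produces the correction term $\sum_n (\hat{p}_n - p_n)g_n(\lambda^* p_n)$, so the two $C$'s cancel and only this correction survives. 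Solving the resulting scalar linear equation for $\hat{\lambda}^* - \lambda^*$ then yields exactly the announced formula, with denominator $\sum_m \hat{p}_m^2 g'_m(\lambda^* p_m)$ and per-user numerator $g_n(\lambda^* p_n) + \lambda^* \hat{p}_n g'_n(\lambda^* p_n)$.

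The step I expect to be the main obstacle is justifying the error bound $O(\sum_n |\hat{p}_n - p_n|^2)$. In Theorem~\ref{theorem:primal1:Cchange} the only perturbation was $\hat{C} - C$, and monotonicity of $g_n$ let the author collapse the sum of squared per-user residuals into the square of their sum. Here, the residual for user $n$ is driven by \emph{both} $(\hat{\lambda}^* - \lambda^*)\hat{p}_n$ and $\lambda^*(\hat{p}_n - p_n)$, so to conclude that the remainder is $O(\sum_n |\hat{p}_n - p_n|^2)$ I need a self-consistency argument: the linear equation above already forces $|\hat{\lambda}^* - \lambda^*| = O(\sum_n |\hat{p}_n - p_n|)$ (using the strong concavity lower bound $U''_n < -\delta$, which keeps $g'_n(\lambda^* p_n)$ bounded away from $0$ so the denominator is uniformly nonzero). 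Substituting this bound back into the remainder then closes the argument. With this piece in place, the derivation is structurally identical to Theorem~\ref{theorem:primal1:Cchange} and the stated expansion follows.
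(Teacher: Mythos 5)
Your proposal is correct and follows essentially the same route as the paper: the same per-user Taylor expansion of $U'_n$ combined with $U''_n(g_n(q))g'_n(q)=1$, the same decomposition $\hat{\lambda}^*\hat{p}_n-\lambda^*p_n=(\hat{\lambda}^*-\lambda^*)\hat{p}_n+\lambda^*(\hat{p}_n-p_n)$, the same cancellation of the two complementary-slackness constants via $\hat{p}_n=p_n+(\hat{p}_n-p_n)$, and the same bootstrapping step that feeds the linear relation back into the remainder to convert $O(\sum_n|\hat{\lambda}^*\hat{p}_n-\lambda^*p_n|^2)$ into $O(\sum_n|\hat{p}_n-p_n|^2)$. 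One small correction: the bound $U''_n<-\delta$ gives $|g'_n|\leq 1/\delta$ (an upper bound, used to control the Taylor remainder), while the denominator $\sum_m\hat{p}_m^2 g'_m(\lambda^*p_m)$ is nonzero simply because every $g'_m=1/U''_m$ is strictly negative.
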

\begin{proof}
        By Eq. (\ref{eq:derivative}), we have $\hat{\lambda}^*\hat{p}_n=U'_n(g_n(\hat{\lambda}^*\hat{p}_n))$ and ${\lambda}^*p_n=U'_n(g_n({\lambda}^*p_n))$. Hence, for any $n$,
        \begin{align}
            &\hat{\lambda}^*\hat{p}_n - {\lambda}^*\hat{p}_n \nonumber\\
            =& \hat{\lambda}^*\hat{p}_n - {\lambda}^*{p}_n - \lambda^*(\hat{p}_n-p_n)\nonumber\\
            =&U'_n(g_n(\hat{\lambda}^*\hat{p}_n)) - U'_n(g_n({\lambda}^*p_n)) - \lambda^*(\hat{p}_n-p_n)\nonumber\\
            =&\Big(g_n(\hat{\lambda}^*\hat{p}_n)-g_n({\lambda}^*p_n)\Big)U''_n(g_n({\lambda}^*p_n))- \lambda^*(\hat{p}_n-p_n)\nonumber\\
            &+O(|g_n(\hat{\lambda}^*\hat{p}_n)-g_n({\lambda}^*p_n)|^2) \hspace{30pt}\mbox{(Taylor's Theorem)}\nonumber\\
            =&\Big(g_n(\hat{\lambda}^*\hat{p}_n)-g_n({\lambda}^*p_n)\Big)/g'_n({\lambda}^*p_n)- \lambda^*(\hat{p}_n-p_n)\nonumber\\
            &+O(|g_n(\hat{\lambda}^*\hat{p}_n)-g_n({\lambda}^*p_n)|^2) \hspace{30pt}\mbox{($\because$ Eq.~(\ref{eq:derivative_inverse})).} \label{eq:p_change1}
        \end{align}

        Since $U''_n(g_n(q))< -\delta < 0$ for all $q$, we have $|g'_n(q)|=|1/U''_n(g_n(q))|\leq 1/\delta$ for all $q$. Hence, $|g_n(\hat{\lambda}^*\hat{p}_n)-g_n({\lambda}^*p_n)|\leq |\hat{\lambda}^*\hat{p}_n-{\lambda}^*p_n|/\delta$ and 
        \begin{equation}
            O(|g_n(\hat{\lambda}^*\hat{p}_n)-g_n({\lambda}^*p_n)|^2)= O(|\hat{\lambda}^*\hat{p}_n-{\lambda}^*p_n|^2). \label{eq:p_change2}
        \end{equation}

        Moreover, due to the complementary slackness conditions $\sum_n p_ng_n(\lambda^*p_n)=\sum_n \hat{p}_ng_n(\hat{\lambda}^*\hat{p}_n)=C$, we have 
        \begin{align}
            &\sum_n \hat{p}_n\Big(g_n(\hat{\lambda}^*\hat{p}_n)-g_n({\lambda}^*p_n)\Big)\nonumber\\
            =&\sum_n \hat{p}_ng_n(\hat{\lambda}^*\hat{p}_n) - \sum_n {p}_ng_n({\lambda}^*{p}_n) + \sum_n (p_n-\hat{p}_n)g_n({\lambda}^*{p}_n)\nonumber\\
            =&\sum_n (p_n-\hat{p}_n)g_n({\lambda}^*{p}_n). \label{eq:p_change3}
        \end{align}

        Combining Eq. (\ref{eq:p_change1}), (\ref{eq:p_change2}), and (\ref{eq:p_change3}) and we have
        \begin{align}
            &\sum_n (\hat{\lambda}^*\hat{p}_n - {\lambda}^*\hat{p}_n)g'_n(\lambda^*p_n)\hat{p}_n\nonumber\\
            =&-\sum_n (\hat{p}_n-{p}_n)\Big(g_n(\lambda^*p_n)+\lambda^*\hat{p}_ng'_n(\lambda^*p_n)\Big)\nonumber\\
            &+O(\sum_n|\hat{\lambda}^*\hat{p}_n-\lambda^*p_n|^2) \label{eq:p_change4}
        \end{align}
        and hence
            \begin{align}
    \hat{\lambda}^*-\lambda^*=&  - \sum_n (\hat{p}_n-p_n)\frac{g_n(\lambda^*p_n)+\lambda^*\hat{p}_ng'_n(\lambda^*p_n)}{\sum_m \hat{p}_m^2g'_m(\lambda^*p_m)}\nonumber\\
    &+O(\sum_n|\hat{\lambda}^*\hat{p}_n-\lambda^*p_n|^2). \label{eq:p_change5}
    \end{align}

        Finally, note that $O(\sum_n|\hat{\lambda}^*\hat{p}_n-\lambda^*p_n|^2)=O(\sum_n|(\hat{\lambda}^*-\lambda^*)\hat{p}_n-\lambda^*(p_n-\hat{p}_n|^2)=O(|\hat{\lambda}^*-\lambda^*|^2)+O(|(\hat{\lambda}^*-\lambda^*)\sum_n(\hat{p}_n-p_n)|)+O(\sum_n|\hat{p}_n-p_n|^2)$. Apply Eq.~(\ref{eq:p_change5}) again and we have $O(\sum_n|\hat{\lambda}^*\hat{p}_n-\lambda^*p_n|^2)=O(\sum_n|\hat{p}_n-p_n|^2),$
        which completes the proof.
\end{proof}

Using Theorem~\ref{theorem:primal1:pchanges}, we can get a zero-shot estimate of $\hat{\lambda}^*$:
\begin{equation}\label{eq:primal1:estimate pchange}
    \hat{\lambda}^*\approx \lambda^* - \sum_n (\hat{p}_n-p_n)\frac{g_n(\lambda^*p_n)+\lambda^*\hat{p}_ng'_n(\lambda^*p_n)}{\sum_m \hat{p}_m^2g'_m(\lambda^*p_m)}.
\end{equation}

\subsection{Online Learning Algorithm for Fast Adaptation}

Equipped with zero-shot estimates in the previous three sections, we now present a simple online learning algorithm that can quickly adapt to changes in \textbf{Primal1}.

Our zero-shot estimates, Eq. (\ref{eq:primal1:estimate Cchange}), (\ref{eq:primal1:estimate leave}), (\ref{eq:primal1:estimate join}), and (\ref{eq:primal1:estimate pchange}), only involve $\lambda^*$, $g_n(\lambda^*p_n)$, $g'_n(\lambda^*p_n)$, $C$, and $\vec{p}$. $C$ and $\vec{p}$ are directly observable by the AP, so our algorithm only needs to learn $\lambda^*$, $g_n(\lambda^*p_n)$, and $g'_n(\lambda^*p_n)$. When there is no change in the system, we update the shadow price iteratively by $\lambda_{t+1}=\max\{0, \lambda_t+\epsilon_t(\sum_n x_{n,t}-C)\}$ in each round. Lin, Shroff, and Srikant \cite{lin2006tutorial} has shown that this iterative update ensures $\lim_{t\rightarrow\infty}\lambda_t=\lambda^*$ when the step size $\epsilon_t$ is properly chosen. Suppose some aspects of the system change at the end of round $T$. Since $\lim_{t\rightarrow\infty}\lambda_t=\lambda^*$, we estimate $\lambda^*$ by $\lambda_T$. Since $x_{n,t}=g_n(\lambda_tp_n)$, we estimate $g_n(\lambda^*p_n)$ by $x_{n,T}$. Finally, since $g'_n(\lambda^*p_n)=\lim_{\Delta\rightarrow 0}\frac{g_n((\lambda^*+\Delta)p_n)-g_n(\lambda^*p_n)}{p_n\Delta}$, we estimate $g'_n(\lambda^*p_n)$ by $\frac{x_{n,T}-x_{n,T-1}}{(\lambda_T-\lambda_{T-1})p_n}$. Thus, the AP can easily estimate $\lambda^*$, $g_n(\lambda^*p_n)$, and $g'_n(\lambda^*p_n)$ using only information that is directly available to it with minimum computation. We summarize our algorithm in Alg.~\ref{alg:primal1}.

\begin{algorithm}[h]
   \caption{Online Learning with Zero-Shot Lagrangian Updates for \textbf{Primal1}}
   \label{alg:primal1}
\begin{algorithmic}[1]
\STATE $\lambda_1\leftarrow 0$
\FOR{each iteration $t=1,2,\dots$}
\STATE Announce $\lambda_t$ to all users
\STATE Collect $x_{n,t}$ from all users $n$
\STATE $\lambda_{t+1}\leftarrow\max\{0, \lambda_t+\epsilon_t(\sum_n x_{n,t}-C)\}$
\IF{Some aspects of the system change}
\STATE $\lambda^*\leftarrow \lambda_{t}$
\STATE $g_n(\lambda^*p_n)\leftarrow x_{n,t},$ for all $n$
\STATE $g'_n(\lambda^*p_n)\leftarrow\frac{x_{n,t}-x_{n,t-1}}{(\lambda_t-\lambda_{t-1})p_n}$, for all $n$
\IF{capacity changes from $C$ to $\hat{C}$}
\STATE Set $\lambda_{t+1}$ by Eq.~(\ref{eq:primal1:estimate Cchange})
\ENDIF
\IF{a user leaves the system}
\STATE Set $\lambda_{t+1}$ by Eq.~(\ref{eq:primal1:estimate leave})
\ENDIF
\IF{a user joins the system}
\STATE Set $\lambda_{t+1}$ by Eq.~(\ref{eq:primal1:estimate join})
\ENDIF
\IF{channel quality changes from $\vec{p}$ to $\vec{\hat{p}}$}
\STATE Set $\lambda_{t+1}$ by Eq.~(\ref{eq:primal1:estimate pchange})
\ENDIF
\ENDIF
\ENDFOR
\end{algorithmic}
\end{algorithm}

\section{Online Learning Algorithm with Zero-Shot Updates for Primal2} \label{sec:primal2}

In this section, we study the stochastic network optimization problem with service requirements as described in \textbf{Primal2}. We discuss how to get a zero-shot estimate of the optimal solution to the \textbf{Dual2} problem when some aspects of the system change. Specifically, let $\vec{\lambda}^*$ and $\vec{\hat{\lambda}}^*$ be the optimal solutions to the \textbf{Dual2} problem before and after the system changes, respectively. We aim to get an accurate estimate of $\vec{\hat{\lambda}}^*$ using only information available to the AP before the system changes.

We first establish some useful properties for \textbf{Primal2} and \textbf{Dual2}.

Let $\tilde{G}(\vec{\lambda})$ be the matrix containing all functions $g_{n,s}(\vec{\lambda})$ such that choosing $\tilde{Y}=\tilde{G}(\vec{\lambda})$ minimizes $\sum_s f_s\Big(VP_s(\vec{y}_s)-\sum_n \lambda_n(y_{n,s}-A_n)\Big)$. Due to the summation form of $\sum_s f_s\Big(VP_s(\vec{y}_s)-\sum_n \lambda_n(y_{n,s}-A_n)\Big)$, we have $\vec{g}_s(\vec{\lambda})=\arg\min_{\vec{y}_s}VP_s(\vec{y}_s)-\sum_n \lambda_n(y_{n,s}-A_n)$. 

Since $\vec{g}_s(\vec{\lambda})=\arg\min_{\vec{y}_s}VP_s(\vec{y}_s)-\sum_n \lambda_n(y_{n,s}-A_n)$, we have
\begin{equation}
\frac{\partial}{\partial y_{n,s}}P_s(\vec{y}_s)\Big|_{\vec{y}_s=\vec{g}_s(\vec{\lambda})}=\lambda_n/V, \forall n,s.\label{equation:primal2derivative}
\end{equation}
Taking the partial derivative with respect to $\lambda_m$ on both sides of the above equation yields
\begin{align}
&\sum_k\frac{\partial^2}{\partial y_{n,s}\partial y_{k,s}}P_s(\vec{y}_s)\Big|_{\vec{y}_s=\vec{g}_s(\vec{\lambda})} \frac{\partial g_{k,s}(\vec{\lambda}^*)}{\partial \lambda_m}\nonumber\\
=&\begin{cases}
1/V, &\text{if $m=n$},\\
0, &\text{otherwise.}
\end{cases}
\label{equation:primal2hessian}
\end{align}

Since $P_s(\vec{y}_s)$ is strongly convex, there exists $\delta>0$ such that $\vec{z}^T\nabla^2P_s(\vec{y}_s)\vec{z}\geq \delta \parallel \vec{z}\parallel^2$ for any vector $\vec{z}$. For fixed $m$ and $s$, consider the vector that contains $\frac{\partial g_{k,s}(\vec{\lambda}^*)}{\partial \lambda_m}$ for all $k$. We then have
\begin{align}
    &\delta \sum_k |\frac{\partial g_{k,s}(\vec{\lambda}^*)}{\partial \lambda_m}|^2\nonumber\\
    \leq& \sum_n\sum_k \frac{\partial g_{n,s}(\vec{\lambda}^*)}{\partial \lambda_m}\frac{\partial^2}{\partial y_{n,s}\partial y_{k,s}}P_s(\vec{y}_s)\Big|_{\vec{y}_s=\vec{g}_s(\vec{\lambda})} \frac{\partial g_{k,s}(\vec{\lambda}^*)}{\partial \lambda_m}\nonumber\\
    =&\frac{1}{V}\frac{\partial g_{m,s}(\vec{\lambda}^*)}{\partial \lambda_m}\leq \frac{1}{V}\sum_k |\frac{\partial g_{k,s}(\vec{\lambda}^*)}{\partial \lambda_m}|.
\end{align}
Using Cauchy–Schwarz inequality, we can then show that 
\begin{equation}
    |\frac{\partial g_{k,s}(\vec{\lambda}^*)}{\partial \lambda_m}|\leq \frac{N}{\delta V}, \forall m,k,s. \label{equation:primal2g_bound}
\end{equation}

Moreover, since $P_s(\cdot)$ is an increasing function for each $s$, the complementary slackness conditions can be written as 
\begin{equation}
\sum_s f_sg_{n,s}(\lambda^*)=A_n,
\end{equation}
for all $n$.

\subsection{Dynamics in Service Requirements}

Recall that each user $n$ demands a long-term service requirement of $A_n$. We now address the situation where users change their service requirements from $\vec{A}$ to $\vec{\hat{A}}$.

Let $h_{n,m}:=\sum_sf_s\frac{\partial g_{m,s}(\vec{\lambda}^*)}{\partial \lambda_n}$ and let $\tilde{H}$ be the $N\times N$ matrix containing all $h_{n,m}$. We first establish the following theorem:

\begin{theorem} \label{theorem:primal2:Achange}
If the service requirements change from $\vec{A}$ to $\vec{\hat{A}}$, then 
\begin{equation}
\vec{\hat{\lambda}}^*= \vec{\lambda}^*+\tilde{H}^{-1}(\vec{\hat{A}}-\vec{A})+O(\parallel \vec{\hat{A}}-\vec{A}\parallel^2).
\end{equation}
\end{theorem}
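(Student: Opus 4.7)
The plan is to mirror the strategy used in the proofs of Theorem~\ref{theorem:primal1:Cchange} and Theorem~\ref{theorem:primal1:pchanges}: combine the complementary slackness conditions at $\vec{\lambda}^*$ and $\vec{\hat{\lambda}}^*$ with a first-order Taylor expansion of $g_{n,s}$ around $\vec{\lambda}^*$, and then invert the resulting linear system in matrix form.

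First I would write, for each user $n$, the two complementary slackness conditions $\sum_s f_s g_{n,s}(\vec{\lambda}^*)=A_n$ and $\sum_s f_s g_{n,s}(\vec{\hat{\lambda}}^*)=\hat{A}_n$. Subtracting and applying Taylor's theorem component-wise in the $\vec{\lambda}$ variable gives
\[
\sum_m (\hat{\lambda}^*_m-\lambda^*_m)\sum_s f_s\frac{\partial g_{n,s}(\vec{\lambda}^*)}{\partial \lambda_m} = \hat{A}_n-A_n + O(\|\vec{\hat{\lambda}}^*-\vec{\lambda}^*\|^2),
\]
where the uniform bound Eq.~(\ref{equation:primal2g_bound}) on the first partials of $g_{n,s}$ (and an analogous bound on the second partials obtained by differentiating Eq.~(\ref{equation:primal2hessian}) once more and again invoking strong convexity of $P_s$) guarantees that the Taylor remainder is genuinely $O(\|\vec{\hat{\lambda}}^*-\vec{\lambda}^*\|^2)$ with a constant independent of $s$ and $n$.

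The coefficient of $(\hat{\lambda}^*_m-\lambda^*_m)$ in the $n$-th equation is $\sum_s f_s \partial g_{n,s}(\vec{\lambda}^*)/\partial \lambda_m$, which a priori equals $h_{m,n}$ rather than $h_{n,m}$. The key step is to observe that $\tilde{H}$ is in fact symmetric, so the two coincide. This follows directly from Eq.~(\ref{equation:primal2hessian}): that equation says the matrix $J^{(s)}$ with entries $\partial g_{k,s}/\partial \lambda_m$ satisfies $\nabla^2 P_s(\vec{g}_s(\vec{\lambda}^*))\,J^{(s)}=\tfrac{1}{V}I$, so $J^{(s)}=\tfrac{1}{V}\bigl(\nabla^2 P_s\bigr)^{-1}$, which is symmetric because $P_s$ is twice continuously differentiable. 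Summing over $s$ with weights $f_s$ preserves symmetry and gives $h_{n,m}=h_{m,n}$. The system then reads $\tilde{H}(\vec{\hat{\lambda}}^*-\vec{\lambda}^*)=(\vec{\hat{A}}-\vec{A})+O(\|\vec{\hat{\lambda}}^*-\vec{\lambda}^*\|^2)$. Inverting $\tilde{H}$ and bootstrapping (the leading-order identity already forces $\|\vec{\hat{\lambda}}^*-\vec{\lambda}^*\|=O(\|\vec{\hat{A}}-\vec{A}\|)$, so the remainder upgrades to $O(\|\vec{\hat{A}}-\vec{A}\|^2)$) yields the claim.

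The main obstacle I expect is not the algebra but justifying cleanly that $\tilde{H}$ is invertible so that $\tilde{H}^{-1}$ is well defined. This should follow from strong convexity of each $P_s$: each $\nabla^2 P_s$ is positive definite, hence so is each $J^{(s)}=\tfrac{1}{V}(\nabla^2 P_s)^{-1}$, and $\tilde{H}=\sum_s f_s J^{(s)}$ is a nonnegative combination of positive definite matrices and is therefore positive definite. The second potentially delicate point is the uniformity of the Taylor remainder across all $s\in\mathbb{S}$ and across a neighborhood of $\vec{\lambda}^*$; the derivative bounds derived from Eq.~(\ref{equation:primal2hessian}) together with the parameter $\delta$ from strong convexity of $P_s$ provide the needed uniform control, but in a fully rigorous write-up the explicit dependence on $N$, $\delta$, and $V$ would need to be tracked.
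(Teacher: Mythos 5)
Your proof is correct and arrives at the same linear system as the paper, but by a mirror-image expansion. The paper starts from $\sum_m h_{n,m}(\hat{\lambda}^*_m-\lambda^*_m)$, converts each difference $\hat{\lambda}^*_m-\lambda^*_m$ into a difference of gradients of $P_s$ via Eq.~(\ref{equation:primal2derivative}), Taylor-expands $\nabla P_s$ in the primal variable $\vec{y}_s$, and then uses the inverse-Jacobian identity Eq.~(\ref{equation:primal2hessian}) to collapse the double sum to $\sum_s f_s\big(g_{n,s}(\vec{\hat{\lambda}}^*)-g_{n,s}(\vec{\lambda}^*)\big)=\hat{A}_n-A_n$. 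You instead subtract the two complementary slackness conditions and Taylor-expand $g_{n,s}$ directly in $\vec{\lambda}$; this is the same first-order perturbation of the KKT system read in the opposite direction, and it naturally produces the transposed coefficients $h_{m,n}$, which you correctly reconcile by observing that $\nabla\vec{g}_s(\vec{\lambda}^*)=\frac{1}{V}\big(\nabla^2P_s\big)^{-1}$ is symmetric and hence so is $\tilde{H}$. The minor cost of your route is that the Taylor remainder requires second-order control on $g_{n,s}$ as a function of $\vec{\lambda}$ (the paper only expands $\nabla P_s$ and then bounds the remainder through Eq.~(\ref{equation:primal2g_bound})), and you rightly flag that this needs an extra differentiation of Eq.~(\ref{equation:primal2hessian}); in exchange you make explicit two points the paper leaves implicit, namely that $\tilde{H}$ is symmetric positive definite and therefore invertible, and exactly where the bootstrapping step upgrades the $O(\parallel\vec{\hat{\lambda}}^*-\vec{\lambda}^*\parallel^2)$ remainder to $O(\parallel\vec{\hat{A}}-\vec{A}\parallel^2)$. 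Both arguments are sound and of comparable rigor.
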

\begin{proof}
For a fixed $n$, we have
\begin{align}
&\sum_m h_{n,m}(\hat{\lambda}^*_m-\lambda^*_m)\nonumber\\
=&V\sum_m \sum_sf_s\frac{\partial g_{m,s}(\vec{\lambda}^*)}{\partial \lambda_n}
\Big( \frac{\partial}{\partial y_{m,s}}P_s(\vec{y}_s)\Big|_{\vec{y}_s=\vec{g}_s(\vec{\hat{\lambda}}^*)}\nonumber\\
& -  \frac{\partial}{\partial y_{m,s}}P_s(\vec{y}_s)\Big|_{\vec{y}_s=\vec{g}_s(\vec{\lambda}^*)} \Big) \hspace{60pt}\text{($\because$ Eq. (\ref{equation:primal2derivative}))} \nonumber\\
=&V\sum_m \sum_k \sum_sf_s\frac{\partial g_{m,s}(\vec{\lambda}^*)}{\partial \lambda_n} \Big[\big(g_{k,s}(\vec{\hat{\lambda}}^*)-g_{k,s}(\vec{\lambda}^*)\big)\times\nonumber\\
&\frac{\partial^2}{\partial y_{m,s}\partial y_{k,s}}P_s(\vec{y}_s)\Big|_{\vec{y}_s=\vec{g}_s(\vec{\lambda})}\Big]\nonumber\\
&+O(\sum_k\sum_s\big(g_{k,s}(\vec{\hat{\lambda}}^*)-g_{k,s}(\vec{\lambda}^*)\big)^2) \hspace{15pt}\text{(Taylor's Theorem)}\nonumber\\
=& \sum_s f_s g_{n,s}(\vec{\hat{\lambda}}^*) - \sum_s f_sg_{n,s}(\vec{\lambda}^*)\nonumber\\
&+O(\sum_k\sum_s\big(g_{k,s}(\vec{\hat{\lambda}}^*)-g_{k,s}(\vec{\lambda}^*)\big)^2) \hspace{15pt}\text{($\because$ Eq. (\ref{equation:primal2hessian}))}\nonumber\\
=& \hat{A}_n-A_n+O(\sum_k\sum_s\big(g_{k,s}(\vec{\hat{\lambda}}^*)-g_{k,s}(\vec{\lambda}^*)\big)^2). \label{eq:primal2:Achange1}
\end{align}

Combining the above equation over all users $m$, we have
\begin{equation} \label{eq:primal2:Achange2}
\vec{\hat{\lambda}}^*-\vec{\lambda}^*=\tilde{H}^{-1}(\vec{\hat{A}}-\vec{A})+O(\sum_{k,s}\big(g_{k,s}(\vec{\hat{\lambda}}^*)-g_{k,s}(\vec{\lambda}^*)\big)^2).
\end{equation}
Finally, due to Eq. (\ref{equation:primal2g_bound}), we have $\big(g_{k,s}(\vec{\hat{\lambda}}^*)-g_{k,s}(\vec{\lambda}^*)\big)^2=O(\parallel \vec{\hat{\lambda}}^*-\vec{\lambda}^*\parallel^2)$. Apply Eq. (\ref{eq:primal2:Achange2}) again completes the proof.
\end{proof}

Using Theorem~\ref{theorem:primal2:Achange}, we can obtain a zero-shot estimate of $\vec{\hat{\lambda}}^*$:
\begin{equation}    \label{equation:primal2:Achange_update}
    \vec{\hat{\lambda}}^*\approx \vec{\lambda}^*+\tilde{H}^{-1}(\vec{\hat{A}}-\vec{A})
\end{equation}

\subsection{Dynamics in User Composition}

Consider the scenario in which users may join or leave the system dynamically. We first address the case when a user leaves the system. Without loss of generality, assume that user $N$ leaves the system. User $N$ leaving the system is equivalent to the scenario where user $N$ reduces its service requirement to 0. Hence, we can construct $\vec{\hat{A}}$ such that $\hat{A}_N=0$ and $\hat{A}_n=A_n$ for all $n\neq N$. We then apply Eq.~(\ref{equation:primal2:Achange_update}) to obtain a zero-shot estimate of $\vec{\hat{\lambda}}^*$.

Next, we address the case when a new user, numbered as user $N+1$, joins the system. Obtaining a zero-shot estimate of $\vec{\hat{\lambda}}^*$ is harder in this case because we have little information about user $N+1$ other than its service requirement $A_{N+1}$. To obtain a zero-shot estimate, we first construct $\vec{\hat{A}}$ such that $\hat{A}_n=A_n+\frac{A_{N+1}}{N}$. This is to make $\sum_{n=1}^{N}\hat{A}_n=\sum_{n=1}^{N+1}A_{n}$. We then apply Eq.~(\ref{equation:primal2:Achange_update}) to obtain a zero-shot estimate of $\hat{\lambda}^*_1, \hat{\lambda}^*_2, \dots$. Finally, we estimate $\hat{\lambda}^*_{N+1}$ as $\sum_{n=1}^N\hat{\lambda}^*_n/N$.

\subsection{Dynamics in Channel State Distribution}

Recall that $f_s$ is the probability that the channel state at particular point of time is $t$. In wireless communications, the channel quality of a link is mainly influenced by the mobility pattern of the user. When mobility patterns change, $f_s$ also changes. In this section, we discuss the scenario when the channel state distribution changes from $\vec{f}$ to $\vec{\hat{f}}$.

Let $\vec{\lambda}^*$ and $\vec{\hat{\lambda}}^*$ be the optimal solution to the \textbf{Dual2} problem before and after the channel state distribution changes, respectively. We seek to obtain a zero-shot estimate of $\vec{\hat{\lambda}}^*$.

Let $\hat{h}_{n,m}:=\sum_s\hat{f}_s\frac{\partial g_{m,s}(\vec{\lambda}^*)}{\partial \lambda_n}$ and let $\tilde{\hat{H}}$ be the $N\times N$ matrix containing all $\hat{h}_{n,m}$. Let $d_n:=\sum_s(f_s-\hat{f}_s)g_{n,s}(\vec{\lambda})$ and let $\vec{d}$ be the vector containing all $d_n$. We have the following theorem:

\begin{theorem} \label{theorem:primal2:Fchange}
If the channel state distribution changes from $\vec{f}$ to $\vec{\hat{f}}$, then
\begin{equation}
\vec{\hat{\lambda}}^*= \vec{\lambda}^*+\tilde{\hat{H}}^{-1}\vec{d}+O(\sum_s (f_s-\hat{f}_s)^2).
\end{equation}
\end{theorem}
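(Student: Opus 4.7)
The plan is to mirror the proof of Theorem~\ref{theorem:primal2:Achange}, but replacing the change in $\vec{A}$ with the change in $\vec{f}$ and working with $\tilde{\hat{H}}$ (which uses the post-change distribution) rather than $\tilde{H}$. The key observation is that the two complementary slackness conditions before and after the change share the same right-hand side $A_n$, but the weights on $g_{n,s}$ differ: $\sum_s f_sg_{n,s}(\vec{\lambda}^*)=A_n$ and $\sum_s \hat{f}_sg_{n,s}(\vec{\hat{\lambda}}^*)=A_n$. Subtracting these after adding and subtracting $\sum_s \hat{f}_s g_{n,s}(\vec{\lambda}^*)$ gives exactly $\sum_s \hat{f}_s[g_{n,s}(\vec{\hat{\lambda}}^*)-g_{n,s}(\vec{\lambda}^*)] = d_n$, which will be the quantity that the linearization must match.

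First I would start, for each fixed $n$, from the expression $\sum_m \hat{h}_{n,m}(\hat{\lambda}_m^*-\lambda_m^*)$ and use Eq.~(\ref{equation:primal2derivative}) applied at both $\vec{\hat{\lambda}}^*$ and $\vec{\lambda}^*$ to rewrite $\hat{\lambda}_m^*-\lambda_m^*$ as $V$ times the difference of $\partial P_s/\partial y_{m,s}$ evaluated at $\vec{g}_s(\vec{\hat{\lambda}}^*)$ and $\vec{g}_s(\vec{\lambda}^*)$. Then I would pull the sum inside:
\begin{align*}
\sum_m \hat{h}_{n,m}(\hat{\lambda}_m^*-\lambda_m^*)
=&\, V\sum_m\sum_s \hat{f}_s \frac{\partial g_{m,s}(\vec{\lambda}^*)}{\partial \lambda_n}\\
&\times\!\Big[\tfrac{\partial P_s}{\partial y_{m,s}}\!\big|_{\vec{g}_s(\vec{\hat{\lambda}}^*)} \!- \tfrac{\partial P_s}{\partial y_{m,s}}\!\big|_{\vec{g}_s(\vec{\lambda}^*)}\Big].
\end{align*}
A first-order Taylor expansion of the bracket in the $y_{k,s}$ variables, followed by the Hessian identity Eq.~(\ref{equation:primal2hessian}) (interchanging the roles of the indices to collapse the $m$-sum), reduces the double sum to $\sum_s \hat{f}_s[g_{n,s}(\vec{\hat{\lambda}}^*)-g_{n,s}(\vec{\lambda}^*)]$ plus a quadratic remainder $O\!\big(\sum_{k,s}(g_{k,s}(\vec{\hat{\lambda}}^*)-g_{k,s}(\vec{\lambda}^*))^2\big)$.

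Second, I would invoke the two complementary slackness identities above to replace $\sum_s \hat{f}_s[g_{n,s}(\vec{\hat{\lambda}}^*)-g_{n,s}(\vec{\lambda}^*)]$ with $d_n$. Stacking over $n$ gives the matrix equation $\tilde{\hat H}(\vec{\hat{\lambda}}^*-\vec{\lambda}^*)=\vec{d}+O(\|\vec{\hat{\lambda}}^*-\vec{\lambda}^*\|^2)$, where I used Eq.~(\ref{equation:primal2g_bound}) to bound $|g_{k,s}(\vec{\hat{\lambda}}^*)-g_{k,s}(\vec{\lambda}^*)|$ by a constant multiple of $\|\vec{\hat{\lambda}}^*-\vec{\lambda}^*\|$. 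Inverting yields $\vec{\hat{\lambda}}^*-\vec{\lambda}^*=\tilde{\hat H}^{-1}\vec{d}+O(\|\vec{\hat{\lambda}}^*-\vec{\lambda}^*\|^2)$.

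The main obstacle will be the last bootstrap step: converting the self-referential error $O(\|\vec{\hat{\lambda}}^*-\vec{\lambda}^*\|^2)$ into the stated $O(\sum_s(f_s-\hat f_s)^2)$ in terms of the true driving perturbation. I would handle this in two moves. First, from the just-derived equation, when $\vec{\hat f}$ is close to $\vec f$, $\|\vec{\hat{\lambda}}^*-\vec{\lambda}^*\|=O(\|\vec d\|)$ by absorbing the quadratic term into the left side. Second, $d_n=\sum_s(f_s-\hat f_s)g_{n,s}(\vec{\lambda}^*)$, so Cauchy--Schwarz together with the boundedness of $g_{n,s}(\vec{\lambda}^*)$ on a finite state space $\mathbb{S}$ gives $d_n^2=O(\sum_s(f_s-\hat f_s)^2)$, hence $\|\vec d\|^2=O(\sum_s(f_s-\hat f_s)^2)$. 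Substituting back closes the estimate and proves the theorem. The only subtlety is verifying that $\tilde{\hat H}$ is invertible near $\vec{\lambda}^*$, which should follow from the strong convexity of each $P_s$ together with the same Hessian-based reasoning that yields Eq.~(\ref{equation:primal2g_bound}), since $\tilde{\hat H}$ is (up to the factor $V$) the sum over $s$ of $\hat f_s$ times the inverse of the state-dependent Hessian.
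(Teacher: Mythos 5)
Your proposal is correct and follows essentially the same route the paper intends: the paper omits the proof as ``similar to that for Theorem~\ref{theorem:primal2:Achange},'' and your argument is exactly that adaptation --- linearize $\sum_m \hat{h}_{n,m}(\hat{\lambda}^*_m-\lambda^*_m)$ via Eq.~(\ref{equation:primal2derivative}), Taylor's theorem, and Eq.~(\ref{equation:primal2hessian}) to get $\sum_s \hat{f}_s[g_{n,s}(\vec{\hat{\lambda}}^*)-g_{n,s}(\vec{\lambda}^*)]$, then use the two complementary slackness conditions to identify this with $d_n$. Your explicit bootstrap converting the self-referential remainder into $O(\sum_s(f_s-\hat{f}_s)^2)$ is a detail the paper glosses over, but it does not change the approach.
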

\begin{proof}
The proof is similar to that for Theorem~\ref{theorem:primal2:Achange}, and is hence omitted.

\end{proof}
Using Theorem~\ref{theorem:primal2:Fchange}, we can obtain a zero-shot estimate of $\vec{\hat{\lambda}}^*$:
\begin{equation} \label{equation:primal2:fchange}
\vec{\hat{\lambda}}^*\approx \vec{\lambda}^*+\tilde{\hat{H}}^{-1}\vec{d}.
\end{equation}

\begin{algorithm}[h]
   \caption{Online Learning with Zero-Shot Lagrangian Updates in \textbf{Primal2}}
   \label{alg:primal2}
\begin{algorithmic}[1]
\STATE Initialize $\vec{\lambda}^0$
\STATE $Q_{n,1}\rightarrow 0, \forall n$
\FOR{each time slot $t=1,2,\dots$}
\STATE Observe $s_t$
\STATE Choose $\vec{x}_t$ to minimize $VP_{s_t}(\vec{x}_t)-\sum_n (\lambda^0_n+Q_{n,t})x_{n,t}$
\STATE $Q_{n,t+1}\leftarrow\max\{0, Q_{n,t}+A_n-x_{n,t}\}, \forall n$
\IF{Some aspects of the system change}
\STATE $\lambda_n^*\leftarrow \lambda_n^0+Q_{n,t}$
\FOR{each state $s$}
\STATE $\vec{g}_s(\vec{\lambda}^*)\leftarrow\arg\min_{\vec{y}_s}VP_s(\vec{y}_s)-\sum_n \lambda^*_n(y_{n,s}-A_n)$
\STATE $\nabla\vec{g}_s(\vec{\lambda}^*)\leftarrow\frac{1}{V}\Big(\nabla^2P_s(\vec{y}_s)\Big|_{\vec{y}_s=\vec{g}_s(\vec{\lambda}^*)}\Big)^{-1}$
\ENDFOR
\IF{system requirements change or users join/leave the system}
\STATE $h_{n,m}\leftarrow \sum_s f_s \frac{\partial g_{m,s}(\vec{\lambda}^*)}{\partial \lambda_n}, \forall n,m$
\STATE Calculate $\vec{\hat{\lambda}}^*$ by (\ref{equation:primal2:Achange_update})
\ENDIF
\IF{channel state distribution changes}
\STATE $\hat{h}_{n,m}\leftarrow \sum_s \hat{f}_s \frac{\partial g_{m,s}(\vec{\lambda}^*)}{\partial \lambda_n}, \forall n,m$
\STATE Calculate $\vec{\hat{\lambda}}^*$ by (\ref{equation:primal2:fchange})
\ENDIF
\STATE $\vec{\lambda}^0\leftarrow\vec{\lambda}^0+(\vec{\hat{\lambda}}^*-\vec{\lambda}^*)$
\ENDIF
\ENDFOR
\end{algorithmic}
\end{algorithm}

\subsection{Online Learning Algorithm for Fast Adaptation}

This section presents a simple online learning algorithm that can quickly adapt to changes in \textbf{Primal2}.

We first discuss the estimate of $\vec{\lambda}^*$. When the system starts, the AP initializes a vector $\vec{\lambda}^0$ as its initial guess of $\vec{\lambda}^*$. The AP can set $\vec{\lambda}^0=\vec{\lambda}^*$ if it can solve \textbf{Dual2} or set $\vec{\lambda}^0$ to be an arbitrary constant vector if solving \textbf{Dual2} is computationally infeasible. The AP employs the drift-plus-penalty algorithm to determine service rates. In each time slot $t$, it chooses $\vec{x}_t$ to minimize $VP_{s_t}(\vec{x}_t)-\sum_n (\lambda^0_n+Q_{n,t})x_{n,t}$ and then updates $Q_{n,t+1}$ using Eq. (\ref{equation:primal2:q_update}).

Suppose some aspects of the system change at the end of time slot $T$ and the AP needs to obtain a zero-shot estimate of $\vec{\hat{\lambda}}^*$. Our zero-shot estimates discussed in the previous sections only require the AP to know $\vec{\lambda}^*$, $g_{n,s}(\vec{\lambda}^*)$ and $\frac{\partial g_{m,s}(\vec{\lambda}^*)}{\partial \lambda^*_n}$ for all $n, m,$ and $k$ to calculate the zero-shot estimates. Thus, we fist discuss how to estimate $\vec{\lambda}^*$, $g_{n,s}(\vec{\lambda}^*)$ and $\frac{\partial g_{m,s}(\vec{\lambda}^*)}{\partial \lambda^*_n}$.

Using a similar argument of Huang et al. \cite{huang2014power}, one can show that $\lambda^0_n+Q_{n,t}$ plays the role of $\lambda^*_n$. Hence, we will estimate $\lambda^*_n$ by $\lambda^0_n+Q_{n,T}$. Since the AP knows the penalty function $P_s(\cdot)$ for all $s$, it is then straightforward for the AP to calculate $g_{n,s}(\vec{\lambda}^*)$ using the estimated $\lambda^*_n$. Finally, to calculate $\frac{\partial g_{m,s}(\vec{\lambda}^*)}{\partial \lambda^*_n}$, we let $\nabla\vec{g}_s(\vec{\lambda}^*)$ be the matrix containing $\frac{\partial g_{m,s}(\vec{\lambda}^*)}{\partial \lambda^*_n}$ for all $m$ and $n$. Eq.~(\ref{equation:primal2hessian}) shows that $\nabla\vec{g}_s(\vec{\lambda}^*)=\frac{1}{V}\Big(\nabla^2P_s(\vec{y}_s)\Big|_{\vec{y}_s=\vec{g}_s(\vec{\lambda}^*)}\Big)^{-1}$, which can be easily calculated by the AP.

After obtaining $\vec{\lambda}^*$, $g_{n,s}(\vec{\lambda}^*)$ and $\frac{\partial g_{m,s}(\vec{\lambda}^*)}{\partial \lambda^*_n}$, the AP employs the techniques in the previous sections to calculate a zero-shot estimate of $\vec{\hat{\lambda}}^*$. The AP then sets $\vec{\lambda}^0=\vec{\lambda}^0+(\vec{\hat{\lambda}}^*-\vec{\lambda}^*)$, so as to ensure that $\lambda^0_n+Q_{n,T}$ becomes an accurate estimate of $\hat{\lambda}^*_n$. We summarize the complete algorithm in Alg.~\ref{alg:primal2}.

\section{Simulation Results} \label{sec:simulation}
\begin{figure*}[t]
    \begin{center}
    \subfigure[Dynamics in system capacity]
    {
       \includegraphics[width=0.3\linewidth]{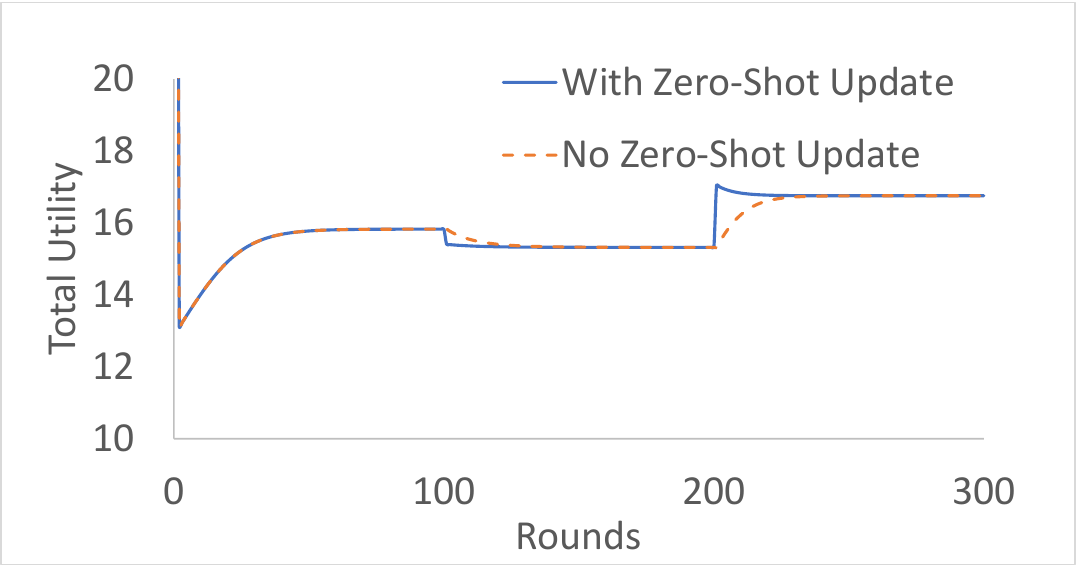}
       \label{fig:primal1_c_utility}
    }
    \subfigure[Dynamics in user composition]
    {
       \includegraphics[width=0.3\linewidth]{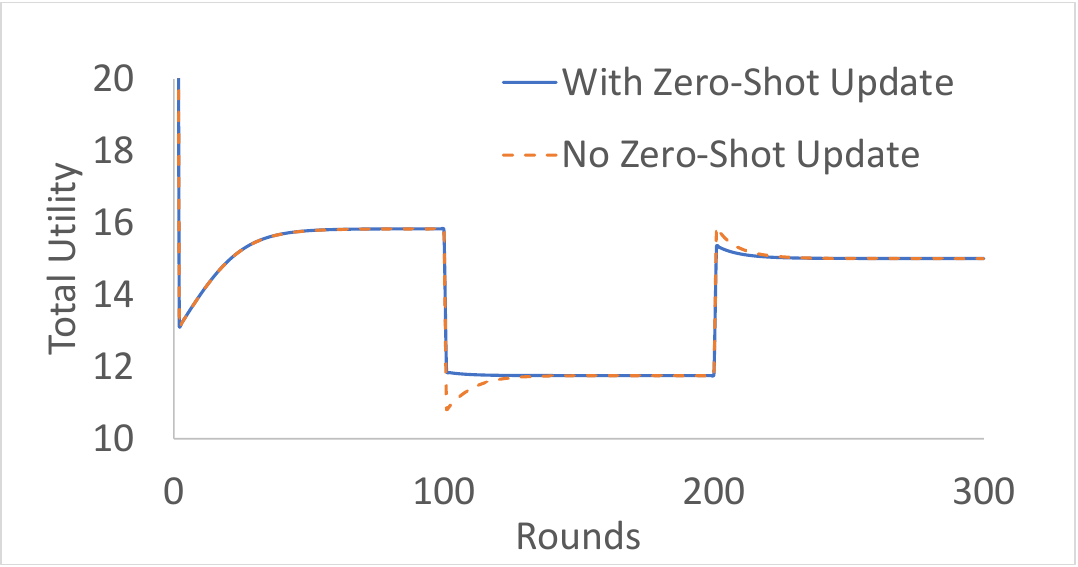}
       \label{fig:primal1_n_utility}
    }
    \subfigure[Dynamics in channel qualities]
    {
       \includegraphics[width=0.3\linewidth]{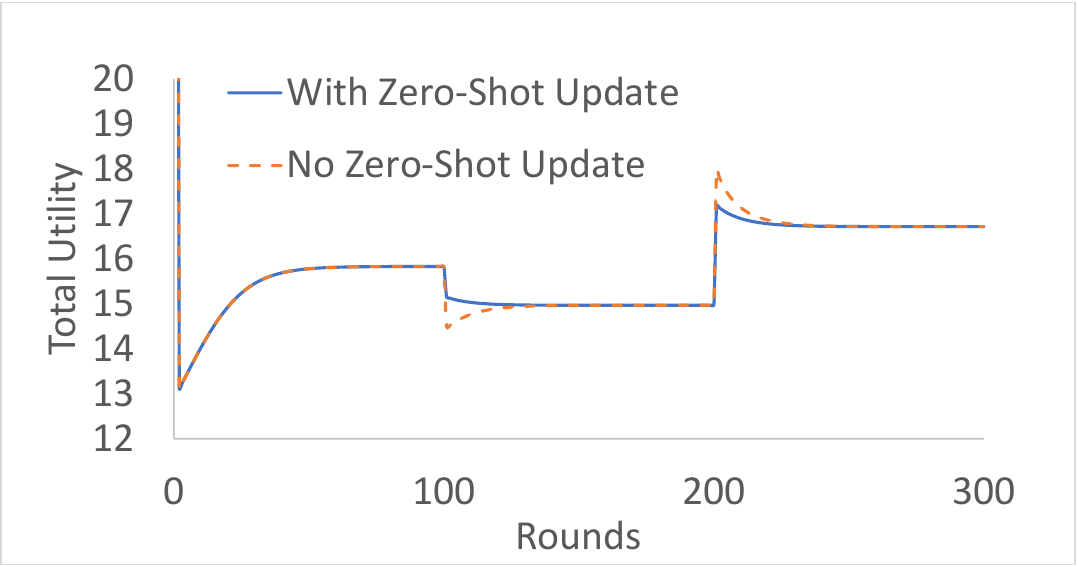}
       \label{fig:primal1_p_utility}
    }
    \end{center}
    \caption{Total utility for \textbf{Primal1} under various system dynamics.}
    \label{fig:primal1_utility}
\end{figure*}

\begin{figure*}[t]
    \begin{center}
    \subfigure[Dynamics in system capacity]
    {
       \includegraphics[width=0.3\linewidth]{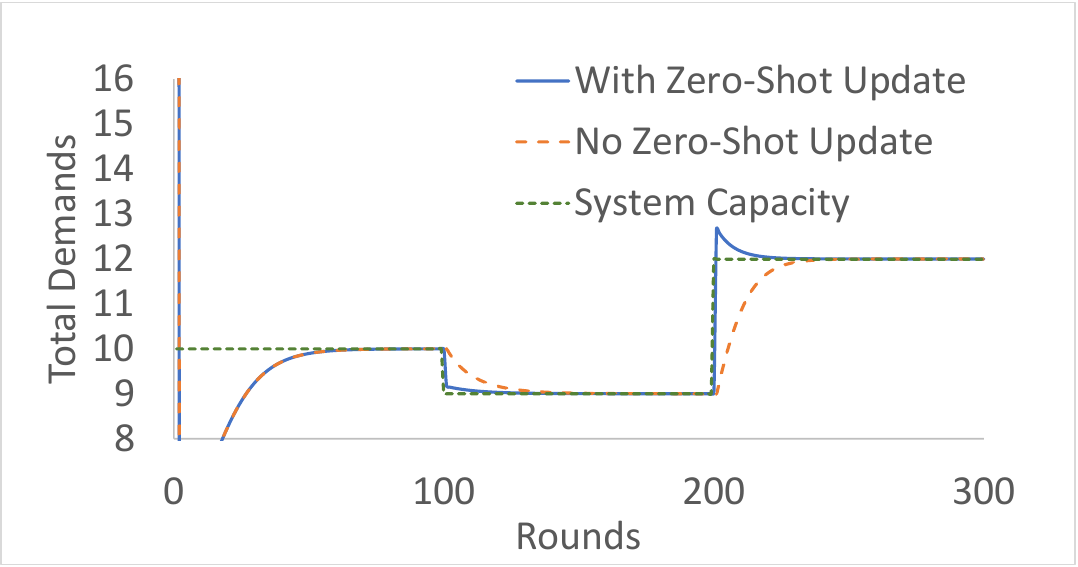}
       \label{fig:primal1_c_usage}
    }
    \subfigure[Dynamics in user composition]
    {
       \includegraphics[width=0.3\linewidth]{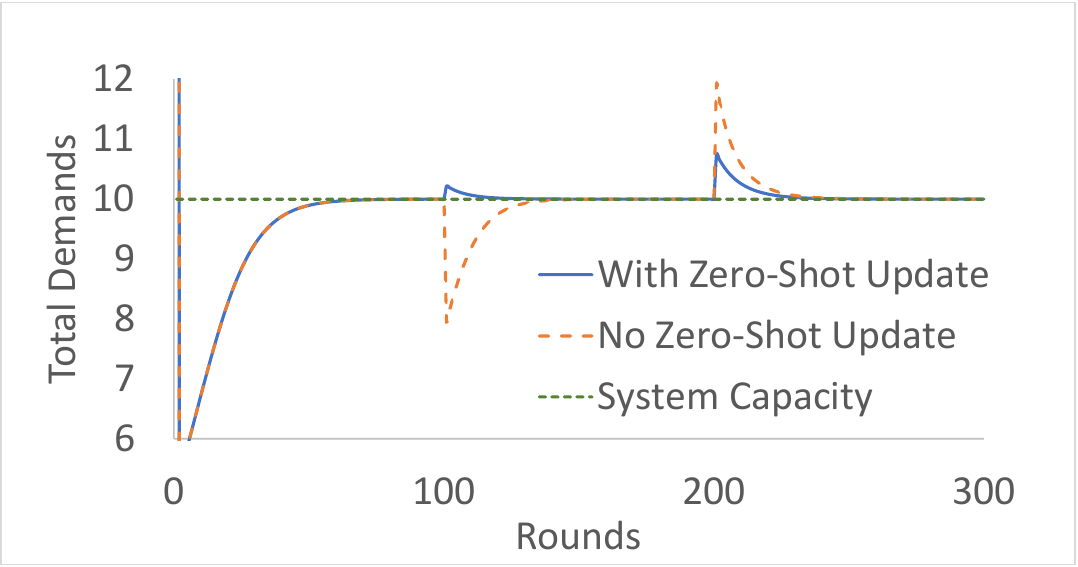}
       \label{fig:primal1_n_usage}
    }
    \subfigure[Dynamics in channel qualities]
    {
       \includegraphics[width=0.3\linewidth]{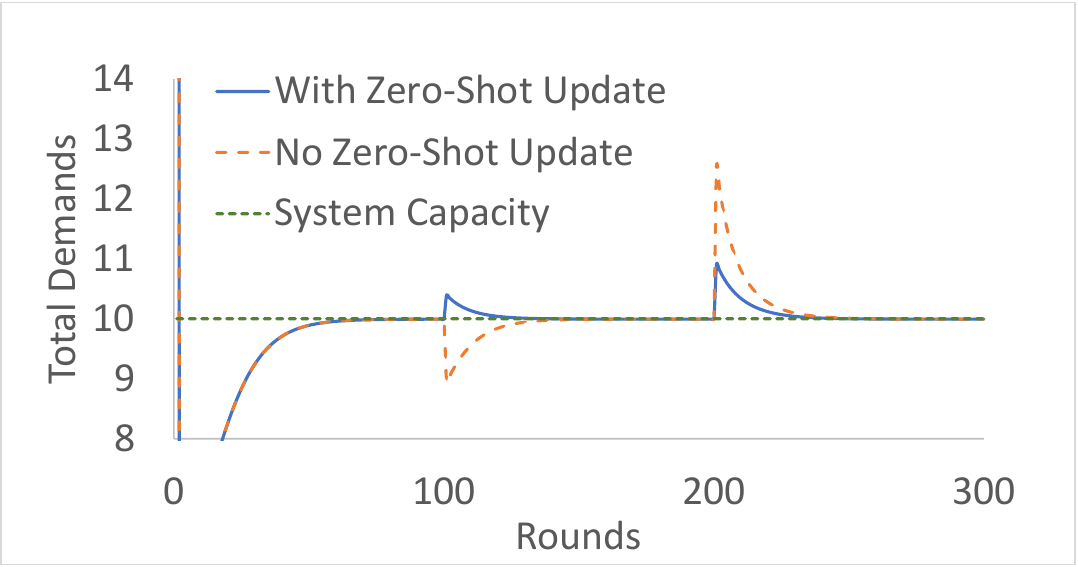}
       \label{fig:primal1_p_usage}
    }
    \end{center}
    \caption{Total resource demands for \textbf{Primal1} under various system dynamics.}
    \label{fig:primal1_utility}
\end{figure*}

\subsection{Distributed Rate Control for Utility Maximization}

We consider a system with capacity $C$ and $N$ users. Each user $n$ has a utility function $U_n(x)$ and channel quality $p_n$. When the system starts at time 0, $C = 10$ and $N=11$. The utility function of user $n$ is $U_n(x)=\frac{x^{\gamma_n}}{\gamma_n}$, where $\gamma_n=0.14+0.06n$, and the channel quality of user $n$ is $p_n=1.4+0.6n$. We will evaluate the performance of our zero-shot update, as shown in Alg.~\ref{alg:primal1}, in three scenarios of system dynamics: dynamics in the system capacity, dynamics in user composition, and dynamics in channel qualities. For each scenario, we evaluate the total utility of the system, $\sum_n U_n(x_n)$, and the total resource demands of all users, $\sum_n p_nx_n$. We note that the system converges to the optimal solution when $\sum_n p_nx_n$ converges to the capacity $C$. We set $\epsilon_t=0.003$ in Alg.~\ref{alg:primal1}. The performance of our zero-shot update is compared against a baseline policy without zero-shot update that only executes lines 1 -- 5 in Alg.~\ref{alg:primal1}.\\

\noindent\textbf{Dynamics in system capacity.} In this scenario, we decrease the system capacity from $C=10$ to $C=9$ at the end of the 100-th round and then increase it from $C=9$ to $C=12$ at the end of the 200-th round. The simulation results are shown in Figs. \ref{fig:primal1_c_utility} and \ref{fig:primal1_c_usage}. When the system capacity decreases from 10 to 9, representing a $10\%$ change, our algorithm with zero-shot update converges to the optimal solution almost immediately. When the system capacity increases from 9 to 12, representing a $33\%$ change, our algorithm with zero-shot update takes a few rounds to converge, but it still converges much faster than the algorithm without zero-shot update. To better understand how the amount of capacity change impacts the convergence speed, we simulate the scenario when the system capacity increases by $z\%$, where $z$ ranges from 5 to 95, and evaluate the number of additional rounds needed for the total resource demand to converge to within $1\%$ from the system capacity. Simulation results are shown in Fig. \ref{fig:primal1_c_convtime}. It can be observed that applying the zero-shot update significantly increases convergence speed. When $z$ is as large as $20$, the zero-shot update alone is able to make $\sum_n p_nx_n$ within $1\%$ from $C$ without any additional updates.\\

\begin{figure}[t]
    \begin{center}
       \includegraphics[width=0.7\linewidth]{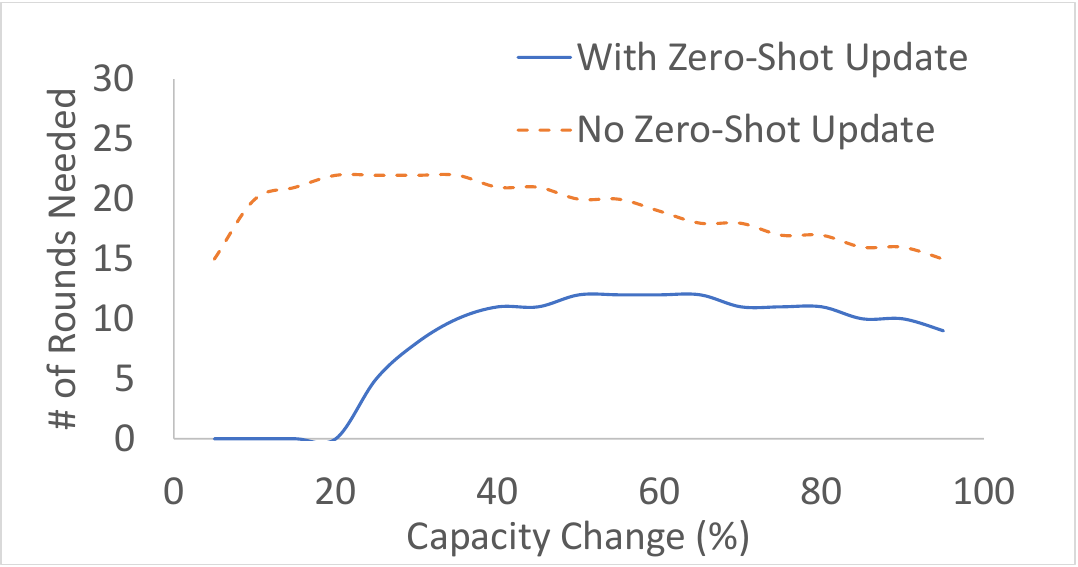}
   \end{center}
    \caption{Convergence speed when the system capacity changes.}
    \label{fig:primal1_c_convtime}
\end{figure}

\noindent\textbf{Dynamics in user composition.} In this scenario, user 1 leaves the system at the end of the 100-th round. Another user, numbered as user 12, with $\gamma_{12}=0.3$ and $p_{12}=5$ joins the system at the end of the 200-th round. We assume that the AP does not know $\gamma_{12}$ and $p_{12}$ when user 12 joins. Hence, it uses the average of $p_n$, $g_{n}(\lambda^*p_{n})$, and $g_{n}'(\lambda^*p_{n})$ of all existing users as estimates to those of user 12 before applying the zero-shot update. Simulation results are shown in Figs. \ref{fig:primal1_n_utility} and \ref{fig:primal1_n_usage}. It can be observed that the zero-shot update significantly improves the convergence speed. We also note that the zero-shot update is more accurate when user 1 leaves the system and is less accurate when user 12 joins the system. This is because we assume no knowledge on the values of $p_n$, $g_{n}(\lambda^*p_{n})$, and $g_{n}'(\lambda^*p_{n})$ of user 12.\\

\noindent\textbf{Dynamics in channel qualities.} In this scenario, all $p_n$ increase by $20\%$ at the end of the 100-th round and decrease by $30\%$ at the end of the 200-th round. Simulation results, as shown in Figs. \ref{fig:primal1_p_utility} and \ref{fig:primal1_p_usage}, show that applying the zero-shot update significantly improves the convergence speed.

\subsection{Stochastic Network Optimization with Service Requirements}

\begin{figure*}[t]
    \begin{center}
    \subfigure[Dynamics in service requirements]
    {
       \includegraphics[width=0.3\linewidth]{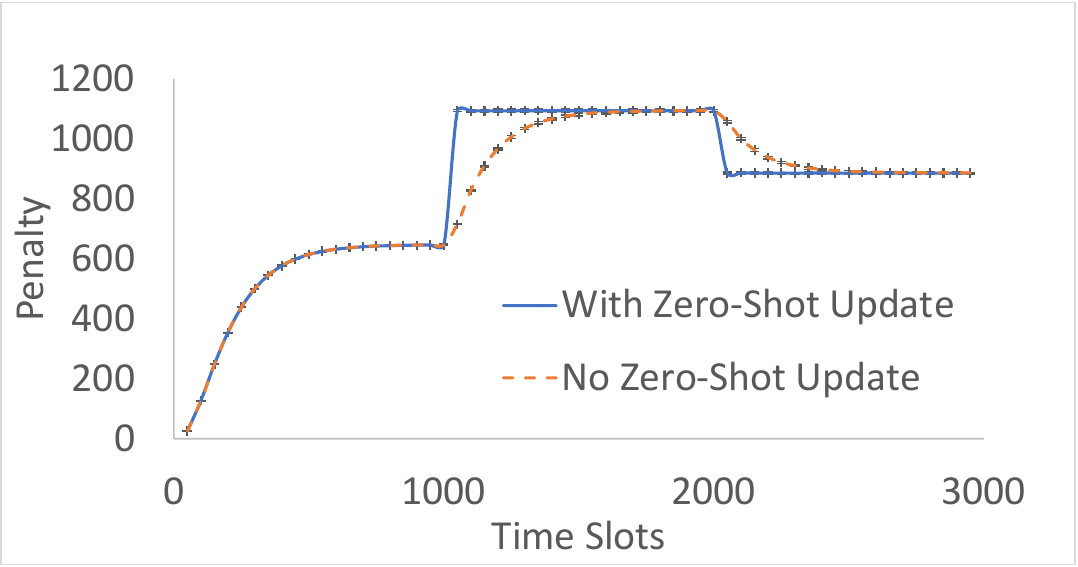}
       \label{fig:primal2_a_penalty}
    }
    \subfigure[Dynamics in user composition]
    {
       \includegraphics[width=0.3\linewidth]{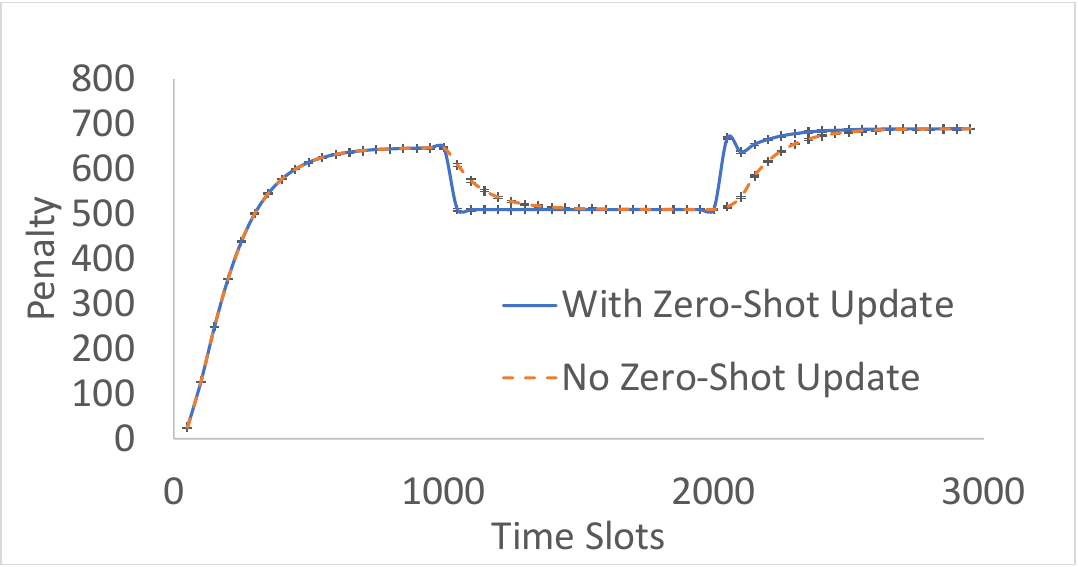}
       \label{fig:primal2_n_penalty}
    }
    \subfigure[Dynamics in channel state distribution]
    {
       \includegraphics[width=0.3\linewidth]{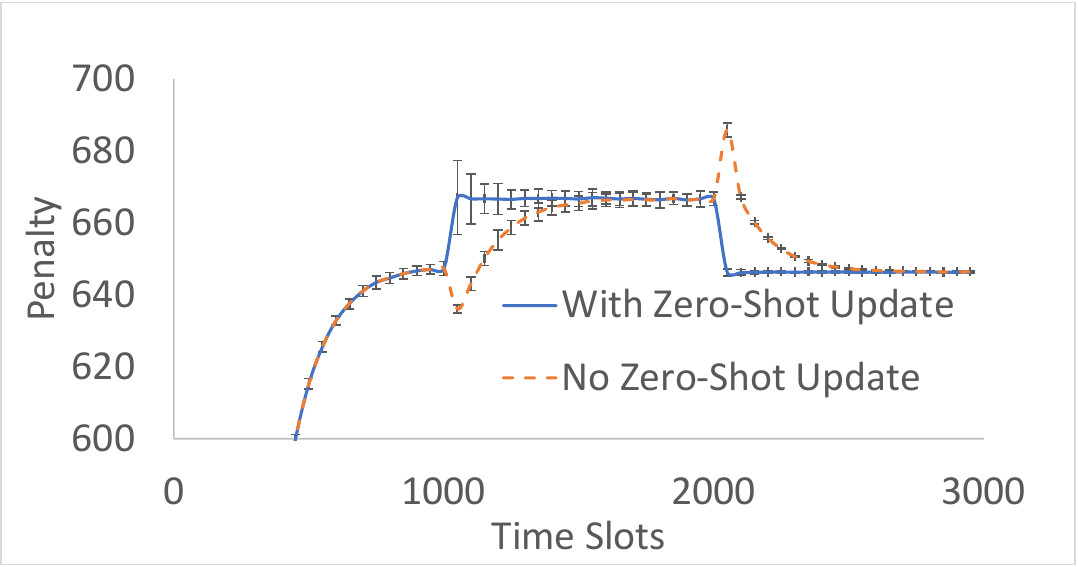}
       \label{fig:primal2_f_penalty}
    }
    \end{center}
    \caption{Total penalty for \textbf{Primal2} under various system dynamics.}
    \label{fig:primal2_penalty}
\end{figure*}

\begin{figure*}[t]
    \begin{center}
    \subfigure[Dynamics in service requirements]
    {
       \includegraphics[width=0.3\linewidth]{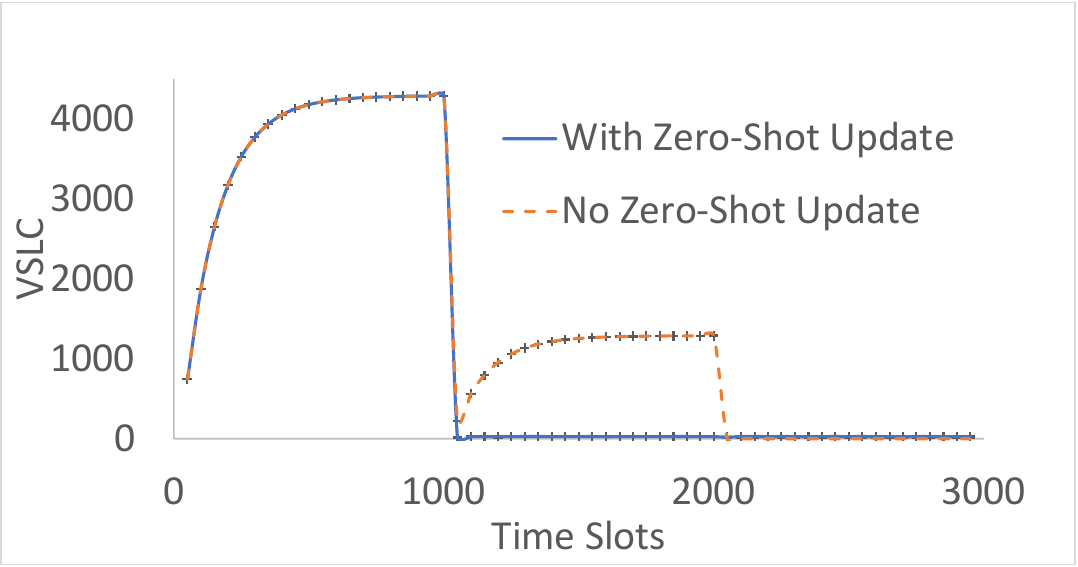}
       \label{fig:primal2_a_vslc}
    }
    \subfigure[Dynamics in user composition]
    {
       \includegraphics[width=0.3\linewidth]{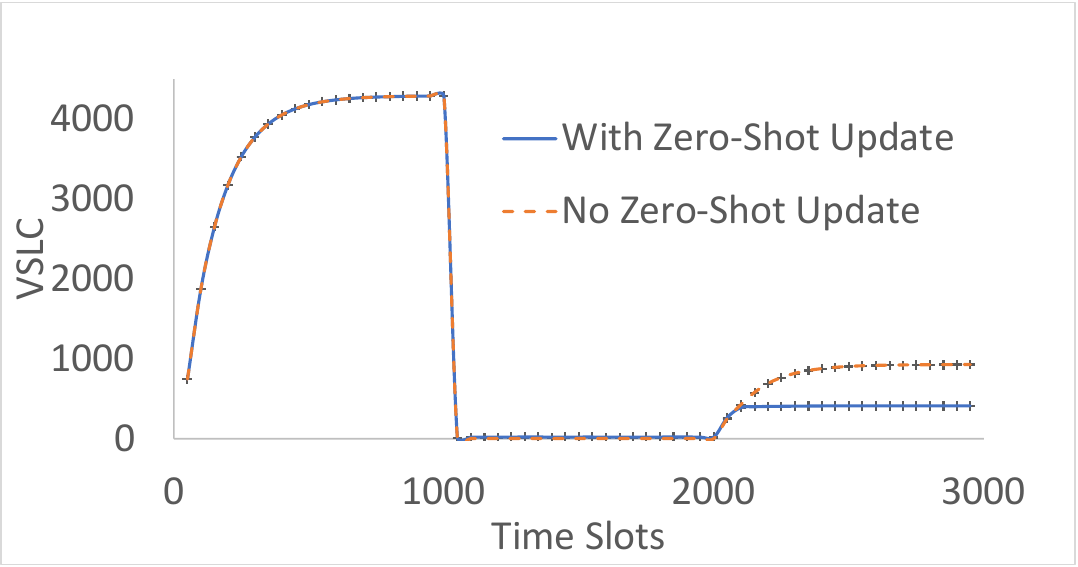}
       \label{fig:primal2_n_vslc}
    }
    \subfigure[Dynamics in channel state distribution]
    {
       \includegraphics[width=0.3\linewidth]{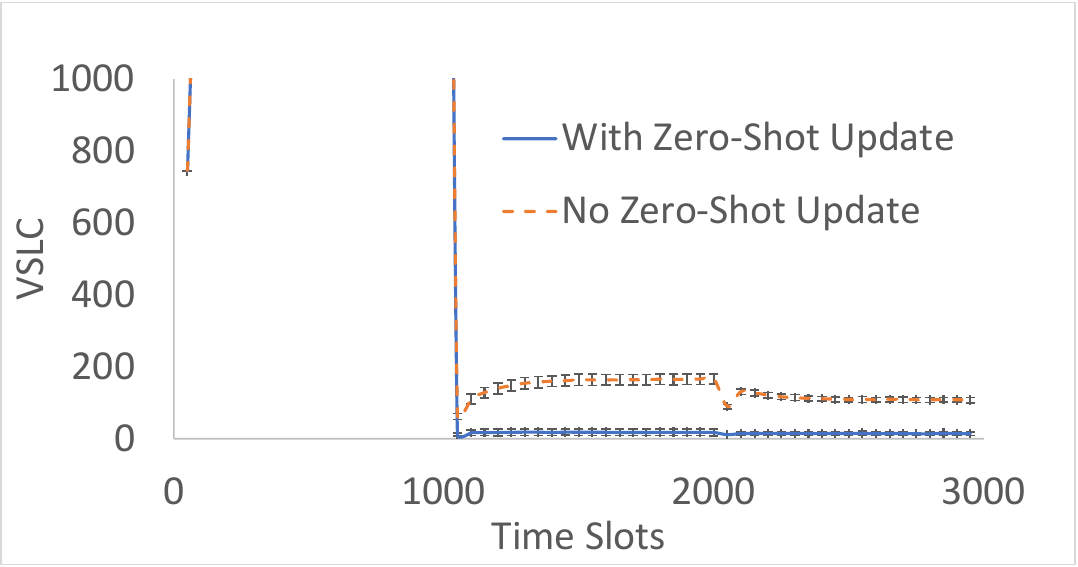}
       \label{fig:primal2_f_vslc}
    }
    \end{center}
    \caption{Total violation in service requirements for \textbf{Primal2} under various system dynamics.}
    \label{fig:primal2_vslc}
\end{figure*}

We consider a system with $N$ users and five channel states, numbered as $s=1,2, \dots,5$. The penalty function is $P_s(\vec{x}):=\frac{1}{2}(\sum_n x_n)^2+\sum_n\frac{1}{2\gamma_{n,s}}x_n^2$, where $\gamma_{n,s}=0.2+0.1\times((s+n+5)\mod 7)$.  When the system starts at time 0, $N=11$, each user $n$ has a service requirement $A_n=1.8+0.2n$, and each state $s$ occurs with probability $f_s=0.2$. 

We will evaluate the performance of our zero-shot update in three scenarios of system dynamics: dynamics in service requirements, dynamics in user composition, and dynamics in state distribution. For each scenario, we evaluate the total 
penalty incurred and the total violation in service requirements. To measure the violation in service requirements, we define violation-since-last-change (VSLC) as follows: Suppose that the last time any aspect of the system changes is at time $T$, then the VSLC at time $t$ is $\sum_n \max\{0,\sum_{\tau=T+1}^t x_{n,\tau}-(t-T)A_n\}$. Due to the randomness of the channel state, we record the running average of penalty and VSLC over the previous 50 time slots. We evaluate each scenario over 100 independent runs and report their mean and standard deviation.\\

\balance

\noindent\textbf{Dynamics in service requirements.} In this scenario, all service requirements $A_n$ increase by $30\%$ at time 1000 and decrease by $10\%$ at time 2000. Simulation results are shown in Fig. \ref{fig:primal2_a_penalty} and \ref{fig:primal2_a_vslc}. When the zero-shot update is employed, the total penalty is within $0.1\%$ of the optimal value immediately without any additional updates and the VSLC never exceeds 30 after each change in service requirements. On the other hand, without the zero-shot update, the total penalty remains more than $1\%$ different from the optimal value after more than 400 slots after each change in service requirements. Moreover, when $A_n$ increase by $30\%$, the system without the zero-shot update needs to accumulates more than 1200 VSLC before it can converge to the optimal value.\\

\noindent\textbf{Dynamics in user composition.} In this scenario, user 11 leaves the system at time 1000 and another user with $A_n=5$ joins the system at time 2000. Simulation results are shown in Figs. \ref{fig:primal2_n_penalty} and \ref{fig:primal2_a_vslc}. When the zero-shot update is employed, the total penalty is within $0.1\%$ of the optimal value immediately when user 11 leaves the system. When another user joins the system, it takes a small amount of time for the system with the zero-shot update to converge. This is because we are not using $\gamma_{s,n}$ of the new user in making the zero-shot update. Still, the system with the zero-shot update converges much faster and accumulates much less VSLC than the system without the zero-shot update.\\

\noindent\textbf{Dynamics in state distribution.} In this scenario, the vector $\vec{f}$ changes from $[0.2, 0.2, 0.2, 0.2, 0.2]$ to $[0.5, 0.5, 0, 0, 0]$ at time 1000 and then changes to $[0,0,0.2, 0.4, 0.4]$ at time 2000. Simulation results are shown in Figs. \ref{fig:primal2_f_penalty} and \ref{fig:primal2_f_vslc}. Similar to the previous two scenarios, it can be observed that the system with the zero-shot update converges to the optimal penalty almost immediately and accumulates much less VSLC than the system without the zero-shot update.

\section{Conclusion}\label{sec:conclusion}

This paper considers network optimization problems in dynamic systems. It studies two different network optimization problems: distributed rate control for utility maximization and stochastic network optimization with service requirements. For each problem, the paper considers that all aspects of the system, including user composition, system capacity, service requirements, and channel qualities, can change abruptly. In order to enable fast adaptation to any kind of change, the paper derives the first-order approximations of the optimal Lagrange multipliers with respect to any parameters of the system. Using these first-order approximations, the paper proposes zero-shot online learning algorithms for the two network optimization problems. The simulation results show that the zero-shot online learning algorithms significantly improve system performance in the transitory phases.

\newpage
\bibliographystyle{ieeetr}
\bibliography{ref}

\end{document}